\documentclass[a4paper, USenglish, numberwithinsect, cleveref,
thm-restate,nolineno]{socg-lipics-v2021}

\pdfoutput=1
\hideLIPIcs

\title{On the Complexity of the Minimum-\texorpdfstring{\boldmath$(k,\rho)$}{(k,ρ)}-Shortcut Problem}

\author{Tatiana Rocha Avila}{Goethe University Frankfurt, Germany}{rochaavila@em.uni-frankfurt.de}{https://orcid.org/0009-0008-6450-2444}{Funded by DFG Research Unit ADYN under grant DFG 411362735.}
\author{Julian Christoph Brinkmann}{Goethe University Frankfurt, Germany}{J.Brinkmann@em.uni-frankfurt.de}{https://orcid.org/0009-0000-0332-4543}{}
\author{Alexander Leonhardt}{Goethe University Frankfurt, Germany}{aleonhardt@ae.cs.uni-frankfurt.de}{https://orcid.org/0009-0006-8263-6900}{Funded by the Deutsche Forschungsgemeinschaft (DFG) – ME 2088/5-2
(FOR 2975 — Algorithms, Dynamics, and Information Flow in Networks).}
\author{Conrad Schecker}{Goethe University Frankfurt, Germany}{schecker@em.uni-frankfurt.de}{https://orcid.org/0000-0002-8103-1911}{}

\authorrunning{T. R. Avila, J. C. Brinkmann, A. Leonhardt and C. Schecker}

\Copyright{Tatiana Rocha Avila, Julian Christoph Brinkmann, Alexander Leonhardt and Conrad Schecker} 

\ccsdesc[500]{Theory of computation~Problems, reductions and completeness}

\keywords{\texorpdfstring{$(k,\rho)$-shortcut}{(k,ρ)-shortcut},shortcut set,hopset,np hardness}

\EventEditors{John Q. Open and Joan R. Access}
\EventNoEds{2}
\EventLongTitle{42nd Conference on Very Important Topics (CVIT 2016)}
\EventShortTitle{CVIT 2016}
\EventAcronym{CVIT}
\EventYear{2016}
\EventDate{December 24--27, 2016}
\EventLocation{Little Whinging, United Kingdom}
\EventLogo{}
\SeriesVolume{42}
\ArticleNo{23}

\usepackage{ragged2e}
\usepackage{mathtools}
\usepackage{tikz}
\usetikzlibrary{fadings, 3d, positioning, calc, decorations.pathreplacing, arrows.meta}
\usepackage{hyphenat}

\newcommand{\rhoconstspt}[3]{\ensuremath{T^{#2\if\relax\detokenize{#3}\relax\else,#3\fi}_{#1}}}

\newcommand{\minkrhoShortcutProb}[1][k]{\ensuremath{\text{min-}(#1,\rho)\text{-}\mathrm{Shortcut}}}
\newcommand{\krhoShortcutProb}{\ensuremath{(k,\rho)\text{-}\mathrm{Shortcut}}}
\newcommand{\minkShortcutProb}{\ensuremath{\text{min-}k}\text{-}\mathrm{Shortcut}}

\newcommand{\krhoTiebreaker}{\ensuremath{(k,\rho)\text{-}\mathrm{Tiebreaker}}}

\newcommand{\HittingSet}{\textsc{HittingSet}}
\newcommand{\dHittingSet}[1]{\ensuremath{#1\text{-}\mathrm{HittingSet}}}

\DeclareMathOperator{\dist}{dist}
\DeclareMathOperator{\hopdist}{hopdist}

\newcommand{\N}{\mathbb{N}}
\newcommand{\krhoRestrictedSubgraph}[1]{\ensuremath{\left.G\right|_{k,#1}}}
\newcommand{\krhoBall}[1]{\ensuremath{(k,#1)\text{-}\mathrm{Ball}}}
\def\NP{\ensuremath{\mathrm{NP}}}
\def\P{\ensuremath{\mathrm{P}}}

\begin{document}

\maketitle
\begin{abstract}

We consider the \textsc{Minimum}-$(k,\rho)$\textsc{-Shortcut} problem~($\minkrhoShortcutProb$), where the goal is to find the smallest set of shortcut edges such that every vertex in a given graph can reach its~$\rho$ closest vertices using paths of at most~$k$ edges.
This is a fundamental graph optimization problem used to accelerate parallel shortest path algorithms.

It is well-known that the problem is trivially solvable for the cases $k=1$ and $k\geq\rho$.
While recent work by Leonhardt, Meyer, and Penschuck (ESA 2024) showed that in undirected graphs \(\minkrhoShortcutProb\) is \NP-hard for $k\geq 3$ if $\rho=\Theta(n^\varepsilon)$, the boundary where the problem transitions from polynomial-time solvable to \NP-hard remained open.

In this paper, we narrow this gap significantly.
We present a simpler and more direct reduction from the Hitting Set problem which establishes that \(\minkrhoShortcutProb\) is \NP-hard for~$k\geq2$ and~$\rho\geq k+2$ in both undirected and directed graphs.
Complementing this, we use the symmetry of the undirected case to show that $\rho=k+1$ is solvable in polynomial time, a regime where the directed version remains a candidate for \NP-hardness.
Therefore, we obtain an almost complete characterization of the complexity of \(\minkrhoShortcutProb\), with the sole remaining open case being \({\rho = k+1}\) in the directed setting.

\end{abstract}
\newpage

\section{Introduction}\label{sec:intro}
The design of algorithms to solve the single source shortest path problem is one of the fundamental challenges of computer science.
Since the inception of the well-known Dijkstra \cite{DBLP:books/mc/22/Dijkstra22a} and Bellman-Ford algorithms \cite{bellman, ford1956network}, efficient practical sequential algorithms have emerged~\cite{DBLP:journals/siamcomp/Goldberg08, DBLP:conf/wea/GeisbergerSSD08}.
In contrast, the progress of a parallel counterpart, namely the parallel single source shortest path problem (PSSSP) on general graphs is limited.

Traditional PSSSP approaches often struggle to achieve low depth (i.e. high parallelism) since this usually requires more work than the sequential Dijkstra's algorithm. Recently, Cao and Fineman~\cite{DBLP:conf/soda/CaoF23} presented a novel algorithm that improved the theoretical upper bounds for solving PSSSP on directed graphs.
It achieves almost linear work~$\widetilde{\mathcal{O}}(m)$ and almost square root depth~$n^{0.5+o(1)}$. 
In practice, variants based on the \textit{stepping framework} are used \cite{DBLP:journals/jal/MeyerS03,DBLP:conf/spaa/Blelloch0ST16,DBLP:conf/spaa/Dong00Z21}.
These variants exhibit superior experimental performance over Bellman-Ford achieving linear depth in specific instances, although they do not provide theoretical guarantees for general instances. 
The gap between theory and practice arises from intricate design and high constant factors involved in novel theoretical algorithms.

To support the theoretical bounds for \textit{stepping based} PSSSP algorithms, Blelloch, Gu, Sun, and Tangwongsan~\cite{DBLP:conf/spaa/Blelloch0ST16} introduced a new graph notion alongside their \textsc{Radius-Stepping} algorithm for undirected graphs.
They call a graph a \textit{$(k,\rho)$-graph}, if every vertex of the graph can reach its~$\rho$-closest vertices by using shortest paths with at most~$k$ edges.
In this setting, the algorithm's  work is almost linear in the number of edges $\Tilde{\mathcal{O}}(m)$ and its depth is bounded by $O\Big(\frac{n}{\rho}\log n\log(\rho L)\Big)$, where $L$ is the ratio between the maximum and minimum edge weight of the graph. 

Any graph can be viewed as a $(k,\rho)$-graph for an appropriate choice of the parameters $k$ and $\rho$. Furthermore, for fixed values of $k$ and $\rho$, any graph can be transformed into a $(k,\rho)$-graph by adding \emph{shortcuts}, that is, additional edges preserving all pairwise distances in the original graph.
This naturally leads to the following optimization problem:
\begin{definition}
    The \textnormal{\textsl{\textsc{Minimum}-$(k,\rho)$\textsc{-Shortcut}}} problem \emph{(\minkrhoShortcutProb)}:
    \begin{itemize}
        \item Input: A weighted graph~$G$.
        \item Problem: Find shortcut set of minimum cardinality for~$G$ such that, after adding the shortcuts to $G$, it becomes a $(k,\rho)$-graph.
    \end{itemize}
\end{definition}

To provide a clear characterization in terms of fixed $k$ and $\rho$, we analyze the decision version of the \(\minkrhoShortcutProb\) defined as follows:
\begin{definition}
The~\textnormal{\textsl{$(k,\rho)$\textsc{-Shortcut}}} problem \emph{(\krhoShortcutProb)}:
\begin{itemize}
    \item Input: A weighted graph~$G$ and~$\alpha\in\N$.
    \item Problem: Decide whether there is a shortcut set~$S$ for~$G$ with~$|S| \le \alpha$ such that after adding the shortcuts to $G$, it becomes a $(k,\rho)$-graph.
\end{itemize}
\end{definition}

Observe that the intractability of deciding whether a shortcut set of size $\alpha$ exists directly implies that the optimization version of the problem is \NP-hard.

In order to obtain graphs with small~$k$ and large~$\rho$, reducing the depth of their algorithm while maintaining almost linear work in the number of edges, Blelloch, Gu, Sun, and Tangwongsan~\cite{DBLP:conf/spaa/Blelloch0ST16} provide a heuristic approach.
However, they did not study the complexity of finding the optimal shortcut set.
Subsequent work by Dong, Gu, Sun, and Zhang~\cite{DBLP:conf/spaa/Dong00Z21} also relies on the notion of~$(k,\rho)$-graphs for their further improved algorithm.
Their experimental results underline the practical relevance of finding shortcuts efficiently for arbitrary~$k$ and~$\rho$.
Therefore, characterizing the complexity of~\(\minkrhoShortcutProb\) has emerged as a key theoretical challenge in the design of work-efficient parallel shortest-path algorithms.

Leonhardt, Meyer, and Penschuck~\cite{DBLP:conf/esa/Leonhardt0P24} study the complexity of~$(k,\rho)$-shortcutting, i.e.~the problem of finding a minimum cardinality set of shortcuts that converts a given undirected graph with~$n$ vertices into an undirected~$(k,\rho)$-graph.
For $k\geq3$, they prove that finding such a set of shortcuts is \NP-hard if~$\rho = \Theta(n^\varepsilon)$ holds for some~$\varepsilon > 0$.
In particular, this means that $\rho$ is not a constant but grows as a function of the number of vertices of the input graph, which is not allowed in the more restrictive problem formulation that we consider, where both $k$ and $\rho$ are fixed.
Their construction relies on an intricate reduction from \textsc{Vertex Cover} that requires pitchfork gadgets (star-like structures), and sophisticated blow up lemmas to maintain the proof integrity.
In this paper, we present a significantly simpler and more general reduction from \textsc{Hitting Set}, extending the hardness to previously unexplored regimes and providing a more direct insight into the problem's fundamental complexity.
In addition, we expand the domain of the problem by also considering the directed setting.

\subsection{Our Contribution}
Our work focuses on characterizing the complexity of~\(\minkrhoShortcutProb\) on both, undirected and directed graphs.

By definition, any directed graph with non-negative edge weights is a~$(k,\rho)$-graph when~${\rho\leq k}$.
For~$k=1$,~\(\krhoShortcutProb\) is solvable in polynomial time for all~$\rho$~\cite{DBLP:conf/esa/Leonhardt0P24}.
Therefore, our work focuses on the case where~$\rho>k\geq2$. 

While our primary result focuses on the hardness of shortcutting, it is worth noting that when multiple vertices are at the same distance from a source vertex, the set of ``$\rho$ closest neighbors'' becomes ambiguous.
In such cases, deciding which specific vertices must be reachable using paths of $k$ edges can significantly alter the optimal shortcut set.
This challenge is captured by the~$\krhoTiebreaker$ problem, in which given a graph, the task is  to select a specific set of $\rho$ closest neighbors for each vertex, such that the selection is compatible with an optimal solution to the~\(\minkrhoShortcutProb\) problem. A formal definition of the problem is given in \cref{defn:tiebreaker-formal}.
We show that~$\krhoTiebreaker$ is a hard problem on its own by providing a straightforward reduction from \textsc{Hitting Set}.
In particular, we show the following result in \cref{sec:classifying-k-rho-msp}:
\begin{restatable}[Tiebreaker hardness]{theorem}{tieBreakerhardness}\label{corollary:tiebreaker-hardness}
    For all~$k\geq2,\rho\geq k+3$,~$\dHittingSet{d}$ is polynomial-time Turing-reducible to~$\krhoTiebreaker$ where \(d=\lfloor(\rho-k+1)/2\rfloor\).
    In particular, the  $\krhoTiebreaker$ problem is \NP-hard under Turing reductions for~$k\geq2$ and~$\rho\geq k+3$.
    This remains true when restricting the problem to only undirected graphs or to only directed graphs.
\end{restatable}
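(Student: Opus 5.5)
We reduce from $\dHittingSet{d}$: a universe $U$, a family $\mathcal{F}$ of sets of size at most $d$ (after padding, exactly $d$), and a budget $\alpha$. The goal is to build a weighted graph $G$ in which every tie that is relevant for the $\rho$-closest sets sits at designated ``set gadgets''. Choosing a tiebreak then amounts to picking, for each set $F\in\mathcal{F}$, one element $u\in F$ that must be reached within $k$ hops. Since the formal definition of $\krhoTiebreaker$ asks for a selection compatible with an optimal $\minkrhoShortcutProb$ solution, the plan is to use a Turing reduction.

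\textbf{The oracle procedure.} We query the tiebreaker oracle on $G$ and obtain a selection. Because $G$ has no ties except at the set gadgets, the selection determines a graph with a unique neighborhood structure. On that graph, computing a minimum shortcut set reduces to counting the distinct selected elements, which takes polynomial time. Comparing this count with $\alpha$ decides the hitting set instance. Possibly we instead query the oracle on graphs with some elements forced in or out and self-reduce; this is the reason the reduction needs Turing power.

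\textbf{Construction.} Each element $u\in U$ gets a \emph{far} vertex $x_u$ hanging at the end of a path of length $k$ from a hub vertex $c_u$. Reaching $x_u$ within $k$ hops from sources farther back then forces one shortcut, and this shortcut can be shared across all sets containing $u$. For each set $F$ we attach a source vertex $s_F$ whose ball of the $\rho$ closest vertices is arranged as follows:
\begin{itemize}
\item about $k-1$ vertices along a private path, all reached within $k$ hops anyway;
\item the $d$ candidate vertices corresponding to $F$'s elements, all placed at the same distance.
\end{itemize}
The count is arranged so that exactly one of these tied candidates fits into the $\rho$ closest vertices. The value $d=\lfloor(\rho-k+1)/2\rfloor$ should come from needing roughly two ball slots per candidate: one for the candidate itself and one for an intermediate vertex on its path. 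These slots, together with the $k-1$ path vertices and the source, must fit into $\rho$ without exceeding it. The condition $\rho\ge k+3$ guarantees $d\ge2$, which is exactly what makes $\dHittingSet{d}$ \NP-hard. We would make weights distinct using small perturbations everywhere except at the intended ties.

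\textbf{Correctness.} We must show two things. First, an optimal shortcut set has size $|\mathcal{F}\text{-gadget baseline}| + (\text{minimum hitting set size})$. Second, any optimal solution's tiebreak corresponds to a minimum hitting set. The argument is an exchange: any shortcut that does not go to a shared element vertex can be replaced by one that does, without increasing the count. The undirected version uses the same gadget. For the directed version we orient edges away from the sources, which only makes the reachability analysis easier.

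The main obstacle is controlling the ball sizes precisely. Adding a shortcut must not change distances, but every gadget vertex is also a source of its own $\rho$-ball. We have to check that these auxiliary balls need only a fixed number of shortcuts, independent of the hitting set choice, and that shared element vertices do not enter the wrong balls. We would handle this by giving inter-gadget edges very large weights, so that every auxiliary ball stays inside its own gadget.
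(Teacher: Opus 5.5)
Your outer structure matches the paper's. There is one oracle call on a single graph, each set gadget has a tie among its $d$ element candidates, each element has one shortcut shared by all sets containing it, and the answer comes from counting the distinct selected elements and comparing with $\alpha$. No self-reduction is needed: Turing power is used only because the oracle returns a selection rather than a bit.

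The gap is the gadget itself, which is where the proof lives, and the one you describe is inconsistent. You want exactly one tied candidate to fit into the $\rho$-closest set of $s_F$. Yet you derive $d$ from ``one slot for the candidate itself and one for an intermediate vertex'' fitting into $\rho$. If every candidate had a slot, all $d$ of them would lie in every $\rho$-closest set, and there would be no tie to break. Your element gadget also contradicts the count. With $x_u$ at the end of a path of length $k$ from $c_u$, all $k$ vertices before $x_u$ on that path are strictly closer to $s_F$ than $x_u$, so each candidate costs $k$ slots, not two. Finally, the number of vertices strictly closer than the tied candidates must be exactly $\rho-1$, not merely at most $\rho$. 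Otherwise there is no tie, or each $s_F$ must reach several candidates and the correspondence with hitting sets breaks. You do not say how the floor in $d$ is absorbed.

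The missing idea is why the cost is two slots per candidate. The shortcut must be usable by every set containing $u$, so it must lie inside $u$'s gadget and start at its entry vertex. Since a shortcut skips at least one vertex, the candidate sits two hops past the entry. The two vertices before it are strictly closer and lie within $k$ hops.

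The paper builds a trunk $P^e\cong P_{k-1}$ from $s_e$ to $t_e$, edges $(t_e,v_1)$ of weight $3$ for $v\in e$, and a unit path $v_1v_2v_3$ per element. Then $s_e$ sees $k-2$ trunk vertices, $d$ vertices $v_1$ at hop $k-1$, and $d$ vertices $v_2$ at hop $k$. After these come the $d$ vertices $v_3$, tied at distance $k+3$ and hop $k+1$. When $\rho-k+1$ is odd, a dummy $w$ joined to every $t_e$ by weight $3$ makes the number of strictly closer vertices exactly $\rho-1$. In the directed version the only shortcut serving several $s_e$ is $(v_1,v_3)$, and every other useful shortcut can be exchanged for one. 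So the optimum equals the minimum hitting set size, realized by the shortcuts $(v_1,v_3)$ for the $v$ whose $v_3$ the oracle selected. Your two slots are therefore $v_1$ and $v_2$, not candidate plus intermediate.

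Your device for the undirected case also fails as stated. Large inter-gadget weights cannot keep the $\rho$-closest set of a vertex inside its gadget when the gadget has fewer than $\rho$ vertices. For example, $v_3$'s gadget holds only $v_1,v_2$, so such balls need an explicit check. The paper merely asserts that only the $s_e$ are deficient in the undirected version, and this does not hold for small $k$. For $k=3$, $\rho=6$ and the single hyperedge $\{a,b\}$, the six closest vertices of $a_3$ include $s_e$ at hop distance $4$.
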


This raises the question of whether restricting to instances where the $\rho$-closest neighbors are unique makes the problem more tractable.  
We show \NP-completeness for both undirected and directed graphs by providing a simple reduction from \textsc{Hitting Set}.
Concretely, we show the following in \cref{sec:classifying-k-rho-msp}:
\begin{restatable}[NP-hardness]{mtheorem}{NPhardness}\label{thm:hitting-reduction}
    For all~$k\geq2,\rho\geq k+2$,~$\dHittingSet{d}\leq_p\krhoShortcutProb$ where \(d=\rho-k\).
    Thus,~$\krhoShortcutProb$ is \NP-complete for~$k\geq2$ and~$\rho\geq k+2$.
    This remains true when restricting the problem to only undirected graphs or to only directed graphs where the $\rho$-closest neighbors are unique.
\end{restatable}

This shows that the hardness is not a result of selecting the closest neighbors but it is inherent to the structural optimization of the shortcuts.
In the undirected case, our result significantly improves on the hardness bound of Leonhardt, Meyer, and Penschuck~\cite{DBLP:conf/esa/Leonhardt0P24} as we do not require instances where~\(\rho\) grows as a function of the number of vertices in order to exhibit hardness.
Furthermore, we extend the \NP-hardness result to the directed setting, which, to the best of our knowledge, has not been explored previously.

This leaves only the case where~$\rho=k+1$.
We use the symmetry of the undirected case to solve $\text{min-}(k,k+1)\text{-}\mathrm{Shortcut}$ in polynomial-time in \cref{sec:k_k+1}.
Hence, we fully resolve the complexity of~\(\minkrhoShortcutProb\) on undirected graphs.
Concretely, we show:
\begin{restatable}{mtheorem}{kplusoneundirected}
	The {\minkrhoShortcutProb} problem is polynomial-time solvable on undirected graphs with strictly positive weights.
\end{restatable}

Therefore, the sole open case is $\rho=k+1$ in the directed setting, which is briefly discussed in \cref{sec:conclusion}.

\subsection{Related Work}

Hub labeling is a related field of research.
Cohen, Halperin, Kaplan, and Zwick~\cite{DBLP:journals/siamcomp/CohenHKZ03} propose a data structure for representing information about every pair of vertices in a graph that relies on labeling vertices, which are then used for answering queries quickly and efficiently.
For example, the labels encode the (approximate) distance or the existence of a path between each pair of vertices.
They introduce \emph{2-hop labels} in different variants depending on the information that is represented, and show that these problems can be solved by finding a \emph{2-hop cover}.
This is a collection~$H$ of (certain) paths in the graph such that for every pair of vertices~$u \neq v$, a path from~$u$ to~$v$ can be retrieved using at most two concatenated paths from~$H$.
If there is no restriction on~$H$, then 2-hop labels for answering reachability queries are obtained.
For obtaining 2-hop labels for answering distance queries,~$H$ is restricted to contain only \emph{shortest paths}.
In this case, using our notion of distance-preserving \emph{shortcuts},~$H$ is essentially a set of shortcuts such that within two hops, each vertex can reach every other reachable vertex.
As such, they consider a special variant of the~$(k,\rho)$-shortcutting problem where~$k = 2$ and~$\rho$ is (e.g.) the number of vertices.
For this problem, there are \NP-hardness results as well as bounds on the approximation factor if the restriction for the paths in~$H$ has certain properties.
There is also research considering the case where $k\geq 3$.
For a more elaborate variant of the problem where vertices have a rank and potential hubs are further restricted by these ranks, see~\cite{DBLP:conf/mfcs/BabenkoGKSW15}.

\section{Preliminaries}\label{sec:preliminaries}
We proceed with a formal definition of the problem.
Let~$G$ denote a weighted graph with vertex set~$V(G)$, edge set~$E(G)$, and edge weights~$w_e \ge 1$ for all~$e \in E(G)$.
$G$ can be either directed or undirected.
For any vertices~$u, v \in V(G)$, a \emph{shortest path from~$u$ to~$v$} is a path from~$u$ to~$v$ with minimum total weight.

\begin{definition}[\text{Weight Distance}, \text{Hop Distance}]
	Let~$u, v \in V(G)$.
	The \emph{weight distance} from~$u$ to~$v$ is the total weight of a shortest path from~$u$ to~$v$, and it is denoted by~$\dist(u,v)$.
	The \emph{hop distance} from~$u$ to~$v$ is the minimum number of edges on a shortest path from~$u$ to~$v$, and it is denoted by~$\hopdist(u,v)$.
	If there is no path from~$u$ to~$v$ in~$G$, we set~${\dist(u,v) = \hopdist(u,v) = \infty}$.
\end{definition}

For every vertex~$u \in V(G)$, let~$R_u \coloneqq \{v \in V(G) \setminus \{u\}: \dist(u,v) < \infty \}$ denote the set of vertices that are reachable from~$u$, except $u$ itself.

\begin{definition}[$\rho$-closest Neighbor Set]\label{defn:rho-closest-neighbor-set}
	Let~$u \in V(G)$ and~$\rho \in \N$.
	A set~$S^u_\rho \subseteq R_u$ is called a~\emph{$\rho$-closest neighbor set of vertex~$u$} if and only if it satisfies the following conditions:
    \begin{enumerate}
        \item~$|S^u_\rho|=\min(|R_u|,\;\rho)$.
        \item For all~$v \in S^u_\rho$,~$w \in R_u \setminus S^u_\rho$, it holds that~$\dist(u,v) \leq \dist(u, w)$.
    \end{enumerate}
\end{definition}

Observe that if no more than~$\rho$ vertices are reachable from~$u$, then any set~$S^u_\rho$ needs to contain exactly all of those vertices, which implies~$S^u_\rho = R_u$.
Otherwise,~$S^u_\rho$ is a (not necessarily unique) set of exactly~$\rho$ vertices that have the smallest weight distances from~$u$.

With this definition, we state a local criterion certifying the adherence of a single vertex to the~$(k,\rho)$-property for any~$k, \rho \in \N$.

\begin{definition}[\text{$(k,\rho)$-Ball,~$(k,\rho)$-Graph}]
	A vertex~$u \in V(G)$ has a \emph{$(k,\rho)$-ball} if and only if there is a~$\rho$-closest neighbor set~$S^u_\rho$ such that~$\hopdist(u,v) \le k$ holds for all~$v \in S^u_\rho$.
	A (multi)-graph is a~$(k,\rho)$-graph if and only if every vertex of the (multi)-graph has a~$(k,\rho)$-ball.
\end{definition}

Any given graph~$G$ can be transformed into a~$(k, \rho)$-graph for any~$k, \rho \in \N$ by adding new edges that preserve weight distances while reducing at least one hop distance.

\begin{definition}[\text{Shortcut}, \text{$(k,\rho)$-Shortcut Set}]
    \label{def:shortcut}
	Let~$u, v \in V(G)$ be vertices with weight distance~${\dist(u,v) < \infty}$ and hop distance~$\hopdist(u,v) > 1$.
    \begin{itemize}
    \item If $G$ is undirected, a (possibly new) edge $\{u,v\}$ with weight~$\dist(u,v)$ is a \emph{shortcut in~$G$}.
    \item If $G$ is directed, a (possibly new) edge $(u,v)$ with weight~$\dist(u,v)$ is a \emph{shortcut in~$G$}.
    \end{itemize}
    A set~$S$ of shortcuts in~$G$ is a \emph{$(k,\rho)$-shortcut set for~$G$} if and only if~$G$ becomes a~$(k,\rho)$-graph with identical weight distances when the shortcuts from~$S$ are added to~$G$.
\end{definition}
\begin{remark}
	\label{lem:shortcuts-preserve-weight-dist}
	Adding shortcuts preserves the weight distance between any pair of vertices.
\end{remark}

Note that when adding a shortcut set, a multigraph might emerge.
However, any edge parallel to a shortcut must have strictly greater weight than the shortcut itself as otherwise the hop distance of the vertices would be 1.
In particular,~\cref{lem:shortcuts-preserve-weight-dist} implies that the parallel edge was not used in any shortest path before. 
Hence, we can remove such edges from the graph to avoid the technicality of dealing with multigraphs.

Due to non-negative edge weights, every graph~$G$ is already a~$(k,\rho)$-graph if~$\rho \le k$, and the corresponding problems have trivial solutions.
Another observation can be made for~$k = 1$, where~$\min(1,\rho)$-\textsc{Shortcut} is polynomial-time solvable for all~$\rho \in \N$:
Compute a~$\rho$-closest neighbor set~$S^u_\rho$ for every vertex~$u \in V(G)$, where ties are first broken in favor of shorter hop distances, and second arbitrarily.
As~$u$ needs to have a~$(1,\rho)$-ball, it is sufficient and necessary to insert a shortcut from~$u$ to~$v$ for each vertex~$v \in S^u_\rho$ with~$\hopdist(u,v) > 1$.
Thus, a~$(1,\rho)$-shortcut set that has minimum cardinality can be computed in polynomial time.
Consequently, the corresponding decision problems are also polynomial-time solvable.
Hence, we focus on those problems where~$\rho > k > 1$ holds.

\section{Hardness of the \texorpdfstring{\boldmath$\krhoShortcutProb$}{(k,ρ)-Shortcut} Problem}\label{sec:classifying-k-rho-msp}
In this section, we prove \cref{thm:hitting-reduction,corollary:tiebreaker-hardness} by giving a simple reduction from~\dHittingSet{d}.
Given a hypergraph~\(H\) with vertex set~$V(H)$ and hyperedge set~$E(H)$, a \emph{hitting set} is a subset~\({U\subseteq V(H)}\) such that~\(U\cap e\neq\emptyset\) for all~\(e\in E(H)\).
This is a generalization of the vertex cover problem.
Given a hypergraph~\(H\) and a natural number~\(\alpha\in\N\), the~\HittingSet\ problem asks if~\(H\) contains a hitting set of size at most~$\alpha$.
This is one of Karp's classical \NP-complete problems \cite{DBLP:conf/coco/Karp72}.
The size of the largest hyperedge in~\(H\) is called the rank of~\(H\).
By restricting the rank of the input hypergraph to be at most~\(d\), one obtains the problem~\dHittingSet{d}.
For \(d\geq2\), this restriction remains \NP-complete.

We begin by proving hardness of $\krhoShortcutProb$.
Observe that $\krhoShortcutProb$ can be seen as two nested covering problems:
First, a $\rho$-closest neighbor sets must be selected for each vertex.
Then, shortcuts must be selected so that every vertex reaches its selected $\rho$-closest neighbor set within $k$ hops.
Interestingly, the $\rho$-closest neighbor sets are unique in our reduction.
This motivates us to investigate $\krhoTiebreaker$ afterwards to determine if the first covering problem is also hard on its own.

\begin{figure}[!t]
	\centering
	\scalebox{0.7}{
		\begin{tikzpicture}[every node/.style={draw,circle}]
			\foreach \i in {1,2,3}{
				\pgfmathtruncatemacro{\oldi}{\i-1};
				\ifnum\oldi=0
				\node (v\i) {$s_{e_\i}$};
				\else
				\ifnum\i=2
				\node[right=3cm of v\oldi] (v\i) {$s_{e_\i}$};
				\else
				\node[right=3cm of v\oldi] (v\i) {$s_{e_\i}$};
				\fi
				\fi
				\ifnum\i=1
				\def\colorarrow{orange}
				\else
				\ifnum\i=2
				\def\colorarrow{red}
				\else
				\def\colorarrow{blue}
				\fi
				\fi
				
				\foreach \j in {1,2}{
					\pgfmathtruncatemacro{\oldj}{\j-1};
					\ifnum\j=2
					\node[below=7mm of vbl\i\oldj] (vbl\i\j) {$t_{e_\i}$};
					\ifnum\i=2
					\node[below=55mm of v\i] (vbm\i\j) {$T_1$};
					\node[below=5mm of vbm\i\j] (T2) {$T_2$};
					\draw[-Stealth] (vbm\i\j) edge node[right,draw=none] {1} (T2);
					\fi
					\else
					\ifnum\j=1
					\def\belowdistl{7mm}
					\node[draw=none, below =\belowdistl of v\i] (vbl\i\j) {\ldots};
					\else
					\node[draw=none,below=of vbl\i\oldj] (vbl\i\j) {\ldots};
					\fi
					\fi
					\ifnum\j=1
					\draw[-Stealth] (v\i) edge node[right,draw=none] {1} (vbl\i\j);
					\else
					\draw[-Stealth] (vbl\i\oldj) edge node[right,draw=none] {1} (vbl\i\j);
					\fi
				}
			}
			\node[dashed] (V) at ($(vbl22)!0.5!(vbm22)$) {$V$};
			\draw[-Stealth, dashed] (vbl12) edge node[left,xshift=-1mm,yshift=-1mm,draw=none] {3} (V);
			\draw[-Stealth, dashed] (vbl22) edge node[right,draw=none] {3} (V);
			\draw[-Stealth, dashed] (vbl32) edge node[right,xshift=1mm,yshift=-1mm,draw=none] {3} (V);
			\draw[-Stealth] (V) edge node[right,draw=none] {1} (vbm22);
			\useasboundingbox (current bounding box);
			\draw[decorate,decoration={brace,raise=33pt,amplitude=5pt,mirror}] (v1.north) to node[draw=none,rectangle,left=1.3cm] {$k-1$} (vbl12.south);
	\end{tikzpicture}}
	\caption{Exemplary construction of the graph~$G$ from the proof of \cref{thm:hitting-reduction} in the directed setting for a hypergraph~$H$ with three hyperedges~$e_1,e_2,e_3$.
		The dashed lines represent the edges from~$t_{e_i}$ to those~$v\in V(H)$ with~$v\in e_i$.}
	\label{fig:reduction}
\end{figure}

\NPhardness*
\begin{proof}
	Let~$k\geq2,\rho\geq k+2$.
	First, we argue that~$\krhoShortcutProb \in \NP$ using the witness characterization.
	It can be verified in polynomial time if any given set~$S$ is an appropriate~$(k,\rho)$-shortcut set for $G$:
	First, ensure that~$S$ contains only shortcuts for~$G$ and that~$|S| \le \alpha$ holds.
	If this is true, add the shortcuts from~$S$ to~$G$, and verify that~$G$ has become a~$(k,\rho)$-graph by checking that each vertex~$u$ has a~$(k,\rho)$-ball.
	This is achieved by computing the distances between all pairs of vertices in~$G$ and then determining the smallest integer~$d_u$ such that the set~$B_u(d_u)$ contains at least~$\min\{|R_u|,\rho\}$ elements for each~$u\in V(G)$, where~$B_u(r)\coloneqq\{v\in V(G): \dist(u,v)\leq r\}$.
	It only remains to check that the added shortcuts allow each~$u$ to reach all elements of~$B_u(d_u-1)$ and at least~$\min\{|R_u|,\rho\}$ elements of~$B_u(d_u)$ within~$k$ hops.
	This is done by running an augmented version of Dijkstra's algorithm from each vertex~$u$, such that, in addition to the distance, the number of hops on the tentative shortest path is also maintained.
	Whenever a new shortest path of equal distance is discovered, ties are broken by fewer number of hops.
	
	For proving \NP-hardness, fix~$k\geq2,\rho\geq k+2$ and set \(d=\rho-k\).
	Let~$(H,\alpha)$ be an instance of~$\dHittingSet{d}$.
	We begin by giving a reductions which maps only to directed graphs.
	Preprocess the hypergraph \(H\) in the following way:
	For each hyperedge \(e\in E(H)\) with \(|e|<d\), add new unique vertices \(v_1^e,\dots,v_{d-|e|}^e\) and replace~\(e\) with the hyperedge \(e\cup\{v_1^e,\dots,v_{d-|e|}^e\}\).
	This padding ensures that each hyperedge has size exactly~\(d\).
	Add an isolated vertex to \(H\) to ensure that $|V(H)|\geq d+1$ if $E(H)\neq\emptyset$.
    Clearly, the new vertices do not influence the existence of a hitting set of size \(\alpha\).
	
	Unless specified otherwise, the edges have unit weights.
	For~$j\in\N$, denote by~$P_j$ a directed path consisting of~$j$ vertices.
	Construct a graph~$G$ as follows:
	Introduce a path~$P^e\cong P_{k-1}$ for every hyperedge~$e\in E(H)$, denote its starting vertex by~$s_e$ and its final vertex by~$t_e$.
	For every~$v\in V(H)$, add a vertex~$v$ and add the edge~$(t_e,v)$ with weight~3 if and only if~$v\in e$.
	Finally, add vertices~$T_1$ and~$T_2$, the edge~$(T_1,T_2)$ and the edges~$(v,T_1)$ for all~$v\in V(H)$.
	The input~$(H,\alpha)$ is mapped to~$(G,\alpha)$.
	This is clearly computable in polynomial time.
	The construction is visualized in \cref{fig:reduction}.
	
	We now show correctness of the reduction.
	Note that the only vertices that do not have a~$(k,\rho)$-ball are the vertices~$s_e$ for all~$e\in E$.
	These vertices reach~$k+|e|-1=\rho-1$ vertices in~$k$ hops, but can only reach~$T_2$ in~$k+1$ hops.
	Therefore, every~\(s_e\) requires some shortcut to reach~\(T_2\) thereby achieving the~$(k,\rho)$-condition.
	
	$\Rightarrow$: Let~$U$ be a hitting set of size at most~$\alpha$.
	The shortcuts~$\{(v,T_2):v\in U\}$ turn~$G$ into a~$(k,\rho)$-graph:
	Every edge of~$H$ is hit, so every vertex~$s_e$ of~$G$ can use one of the shortcuts~$(v,T_2)$ to reach~$T_2$ in~$k$ hops.
	
	$\Leftarrow$: Let~$S$ be a shortcut set of size at most~$\alpha$ turning~$G$ into a~$(k,\rho)$-graph.
	Assume there is some shortcut~$(u,w)\in S$ such that~$u\notin V(H)$.
	No shortcut can begin in~$T_1$ or~$T_2$, so~$u$ lies in~$V(P^e)$ for some~$e\in E(H)$.
	Then the only vertex whose~$(k,\rho)$-ball depends on~$(u,w)$ is~$s_e$.
	Let~$v\in e$ be arbitrary.
	The set~$S'=(S\setminus\{(u,v)\})\cup\{(v,T_2)\}$ is also a shortcut set and its size satisfies~$|S'|\leq|S|$.
	Thus, we may assume that all edges in~$S$ are of the form~$(v,T_2)$ for some~$v\in V(H)$.
	Every vertex~$s_e\in V(G)$ requires some shortcut to reach~$T_2$ in~$k$ hops, so the set~$U=\{v\in V(H): (v,T_2)\in S\}$ is a hitting set of size at most~$\alpha$.
	
	Next, we give a reduction that maps only to undirected graphs.
    This reduction is obtained by replacing every directed edge in the previous reduction with an undirected edge with the same endpoints and weight.
	In the resulting graph $G$, every vertex reaches every other vertex within~2 hops if \(E(H)=\emptyset\).
	Otherwise, the vertices~$s_e$ for all~$e\in E$ require a shortcut to reach $T_2$ within $k$ hops as in the directed case.
	This is the unique closest neighbor as all other vertices that are not already reachable with $k$ hops require two edges of weight~3 to reach.
	For every hyperedge \(e\in E(H)\), every vertex in \(V(P^e)\cup e\cup\{T_1\}\) except for \(s_e\) can reach every vertex in \(V(P^e)\cup e\cup\{T_1,T_2\}\) within \(k\) hops, which are already~\(k-1+d+2\geq\rho\) vertices.
	The vertex \(T_2\) can reach all but the first vertex of every path \(P^e\) as well as~\(V(H)\) and~\(T_1\) within \(k\) hops.
	As \(|V(H)|\geq d+1\), this implies that \(T_2\) reaches at least \(k-2+d+1+1=\rho\) vertices within \(k\) hops.
	Again, every~\(s_e\) requires some shortcut to reach~\(T_2\) in $K$ hops.
	
	$\Rightarrow$: Let~$U$ be a hitting set of size at most~$\alpha$.
	The shortcuts~$\{\{v,T_2\}:v\in U\}$ turn~$G$ into a~$(k,\rho)$-graph:
	Every edge of~$H$ is hit, so every vertex~$s_e$ of~$G$ can use one of the shortcuts~$(v,T_2)$ to reach~$T_2$ in~$k$ hops.
	
	$\Leftarrow$: Let~$S$ be a shortcut set of size at most~$\alpha$ turning~$G$ into a~$(k,\rho)$-graph.
	Similar to before, we may assume that every shortcut is of the form \(\{v,T_2\}\) for some $v\in V(H)$.
	Then the set~$U=\{v\in V(G): \{v,T_2\}\in S\}$ is a hitting set of size at most~$\alpha$ as every vertex~$s_e\in V(G)$ requires some shortcut to reach~$T_2$ in~$k$ hops.
\end{proof}

We already argued that~$\krhoShortcutProb$ is polynomial-time solvable in the cases of $k = 1$ or $\rho \le k$ in~\cref{sec:intro}.
Thus,~\cref{thm:hitting-reduction} covers all remaining cases except for~$\rho=k+1$ with~$k\geq 2$.
This makes it seem unlikely that~$\minkrhoShortcutProb$ can be solved in polynomial-time for any practically relevant values of~$k$ and~$\rho$.
We show that the case $\rho=k+1$ is polynomial-time solvable for undirected graphs with strictly positive weights in \cref{sec:k_k+1}.

The proof of \cref{thm:hitting-reduction} also yields a lower bound for the approximation factor of a variant of~$\minkrhoShortcutProb$ where only~$k$ is a fixed constant, but~$\rho$ is part of the input.
In the following, {\textsc{Min}-\HittingSet} denotes the canonical optimization variant of \HittingSet.
For~$k\in\N$, $\minkShortcutProb$ is the optimization problem of determining the minimum number of shortcuts that turn a given graph \(G\) into a \((k,\rho)\)-graph, where \(\rho\) is part of the input.

\begin{corollary}\label{corollary:approx-hardness}
	For~$k\geq2$, there is no approximation algorithm for~$\minkShortcutProb$ with approximation factor better than~$|V(G)|$ unless $\P=\NP$.
\end{corollary}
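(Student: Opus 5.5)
The plan is a gap-preserving reduction: reuse the construction from the proof of \cref{thm:hitting-reduction}, now with $\rho$ part of the input. The aim is to show that a $|V(G)|$-approximation for $\minkShortcutProb$ would decide \HittingSet{} exactly. This works because, in that construction, the minimum number of shortcuts equals the minimum hitting set size. The step that yields a factor of order $|V(G)|$ is amplifying the gap between zero and positive optimum, or between yes and no instances.

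Concretely, I would first strengthen the construction so that there is a polynomially computable instance with optimum $0$ iff the hitting set instance is a yes-instance. If that fails, I would fall back to the following gap construction. Set $d=|V(H)|$, which is allowed since $\rho=k+d$ is now part of the input. Then pad every hyperedge and attach the gadget as before, so that $\mathrm{OPT}(G)=\tau(H)$, the minimum hitting set size.

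Next I blow up the cost of each wrong shortcut. Replace the single sink $T_1\to T_2$ by $N$ parallel copies $T_1^{(i)}\to T_2^{(i)}$, with $N$ polynomially large. I also enlarge $\rho$ so that each $s_e$ must reach all $N$ sinks, while the internal vertices still have balls thanks to extra pendant vertices. Shortcuts attached to a hitting set $U$ then cost $N|U|$. A shortcut not of the form $(v,T_2^{(i)})$, for instance $(s_e,\cdot)$, serves only one hyperedge. The exchange argument from the proof of \cref{thm:hitting-reduction}, which moves shortcuts to $(v,T_2^{(i)})$, continues to apply, giving $\mathrm{OPT}=N\tau(H)$.

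The factor $|V(G)|$ itself I would obtain by comparing against the trivial upper bound. Shortcutting every $s_e$ directly to every sink gives a feasible solution. The instance is designed so that, in the yes case, the optimum is at most a tiny fraction of $|V(G)|$ times that trivial cost. A $|V(G)|$-approximation then separates the two cases and decides \dHittingSet{d}, so $\P=\NP$.

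The main obstacle is to certify that the ratio between yes and no optima really exceeds $|V(G)|$ after the blow-up, since $|V(G)|$ grows with $N$ as well. I would first try letting the yes-instance have optimum exactly $0$. This requires the closest-neighbour sets to be reachable already when a hitting set of size $0$ exists, which is only possible when $E(H)=\emptyset$. To make that nontrivial, I would instead reduce from the decision question ``is $\tau(H)\le\alpha$'' by prebuilding the $\alpha$ cheapest-looking shortcuts as genuine edges. I would verify this by checking that the exchange argument still forces all useful shortcuts into $V(H)\times\{T_2\}$.
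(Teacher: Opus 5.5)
\textbf{There is a genuine gap.} Neither of your two amplification devices can create the multiplicative gap you need.

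The first device is an instance whose optimum is $0$ iff $(H,\alpha)$ is a yes-instance. Such a reduction cannot exist unless $\P=\NP$. Optimum $0$ means $G$ is already a $(k,\rho)$-graph, and that is checkable in polynomial time; it is exactly the verification step of the \NP-membership argument. ``Prebuilding the $\alpha$ cheapest-looking shortcuts'' does not rescue this, because knowing which shortcuts to prebuild is knowing the hitting set.

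The second device, the $N$ parallel sinks, multiplies the cost of every solution by $N$. You get $\mathrm{OPT}=N\tau(H)$ on yes- and no-instances alike, so the ratio stays $\tau_{\mathrm{no}}/\tau_{\mathrm{yes}}$ while $|V(G)|$ grows. Scaling never amplifies a ratio; that would need a product-type construction in which costs compose multiplicatively. Comparing against the trivial upper bound also says nothing about separating yes- from no-instances. So, as you concede in your last paragraph, a yes/no ratio exceeding $|V(G)|$ is never certified, and that is the entire content of the claim.

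\textbf{The missing idea is the one the paper uses: no gap amplification at all.} Take the reduction of \cref{thm:hitting-reduction} with $\rho:=k+d$, where $d$ is the rank of the input hypergraph. This is possible uniformly, since $d$ enters only through the padding. The reduction is approximation-preserving in both directions:
\begin{itemize}
\item $\mathrm{OPT}(G)=\tau(H)$;
\item by the exchange argument, any shortcut set converts in polynomial time into a hitting set of no larger size.
\end{itemize}
Hence any approximation guarantee for $\minkShortcutProb$ transfers verbatim to Min-\HittingSet, and Feige's inapproximability result for Min-\HittingSet applies.

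Be aware of what this transfer actually gives. Feige's threshold is logarithmic, of order $\ln n$, and $|V(G)|$ is polynomial in the size of $H$. So the argument yields hardness only for a factor logarithmic in $|V(G)|$. The linear factor $|V(G)|$ as printed is not supported by the paper's proof either. Do not try to reach it by scaling the construction; the cost-preserving transfer plus the known logarithmic hardness is the whole argument.
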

\begin{proof}
	The reductions from~$\dHittingSet{d}$ to~$\krhoShortcutProb$ with~${d=\rho-k}$ can be seen as a reduction from~$\text{min}\HittingSet$ to~$\minkShortcutProb$ with fixed~$k\geq2$, by first determining the rank $d$ of the input hypergraph and then choosing~$\rho$ to be~$d+2$.
	This is uniformly possible as the reductions only depend on~$d$ in the padding step.
	Thus, it transfers the hardness of approximation of~{\textsc{Min}-\HittingSet} shown in \cite{DBLP:journals/jacm/Feige98} to~$\minkShortcutProb$.
\end{proof}

In our reduction, the~$\rho$-closest neighbor sets are unique.
Therefore, the hardness of the problem does not stem from the initial selection of which vertices should be in the~$(k,\rho)$-balls, since there is no decision involved.
However, even deciding which~$\rho$-closest neighbor sets lead to an optimal solution for~$\minkrhoShortcutProb$ is generally hard, as we will show in the following.
First, we restate the $\krhoTiebreaker$ problem more formally:
\begin{definition}\label{defn:tiebreaker-formal}
	The~$\krhoTiebreaker$ problem:
	\begin{itemize}
		\item Input: A weighted graph~$G$.
		\item Problem:
		Find a~$\rho$-closest neighbor set~$S^u_\rho$ for each vertex~$u \in V(G)$ such that the following condition is fulfilled:
		There is an optimal solution for~$\minkrhoShortcutProb$ such that for each~$u\in V(G)$ in the emerging graph (with added shortcuts), it holds~$\hopdist(u,v) \le k$ for all~$v \in S^u_\rho$.
	\end{itemize}
\end{definition}

In other words,~$\krhoTiebreaker$ asks for a~$\rho$-closest neighbor set for each vertex of the graph such that there is an optimal shortcut set where the respective neighbor sets constitute~$(k,\rho)$-balls.
Observe that for each vertex~$v$ there can be multiple eligible vertices for a~$\rho$-closest neighbor set if there are at least two vertices at the boundary of the~$(k,\rho)$-ball with the same distance to~$v$.

\tieBreakerhardness*

\begin{proof}
	Let~$k\geq2$ and~$\rho\geq k+3$, set \(d=\lfloor(\rho-k+1)/2\rfloor\).
	Again, we give an undirected and a directed reduction, where the undirected reduction is obtained from the directed one by replacing directed edges with undirected edges with the same endpoints and weight.
	The directed construction from the proof of \cref{thm:hitting-reduction} is easily adjusted to encode the solution in the choice of the~$\rho$-closest neighbor sets.
	Let~$(H,\alpha)$ be an instance of~$\dHittingSet{d}$.
	Preprocess the hypergraph~\(H\) in the following way:
	For each hyperedge \(e\in E(H)\) with \(|e|<d\), add new unique vertices \(v_1^e,\dots,v_{d-|e|}^e\) and replace \(e\) with the hyperedge \(e\cup\{v_1^e,\dots,v_{d-|e|}^e\}\).
	This padding ensures that each hyperedge has size exactly~\(d\).
	Add an isolated vertex to \(H\) to ensure that $|V(H)|\geq d+1$ if $E(H)\neq\emptyset$.
    Clearly, the new vertices do not influence the existence of a hitting set of size \(\alpha\).
	
	Unless specified otherwise, the edges have unit weights.
	For~$j\in\N$, denote by~$P_j$ a directed path consisting of~$j$ vertices.
	Construct a graph~$G$ as follows:
	Introduce a path~$P^e\cong P_{k-1}$ for every hyperedge~$e\in E(H)$, denote its starting vertex by~$s_e$ and its final vertex by~$t_e$.
	For every~$v\in V(H)$, add the vertices~$v_1,v_2,v_3$  and add the edges~$(t_e,v_1)$ with weight~3 if and only if~$v\in e$.
	Additionally, add the edges~$(v_1,v_2)$ and $(v_2,v_3)$ for every \(v\in V(H)\).
	If $\rho-k+1$ is odd, add a new vertex~$w$ as well as the edge~$(t_e,w)$ with weight~3 for all~$e\in E(H)$.
	
	If $\rho-k+1$ is odd, the vertex $w$ ensures that each of the vertices $s_e$ reaches $k-2+2d+1=\rho-1$ of its nearest neighbors in $k$ hops.
	If $\rho-k+1$ is even, each of the vertices $s_e$ reaches~$k-2+2d=\rho-1$ of its nearest neighbors in $k$ hops.
	Each $s_e$ must receive a shortcut that allows it to reach one of the vertices \(v_3\) with \(v\in e\) within $k+1$ hops.
	As before, these are the only vertices that require shortcuts in both the undirected and the directed case.
	
	Similar to the proof of \cref{thm:hitting-reduction}, a hitting set of size~$\alpha$ corresponds to~$\alpha$ shortcuts of the form~$(v_1,v_3)$.
	Given the sets~$S^{s_e}_\rho$ for all~\(e\in E(H)\), it is clear which shortcuts must be added to achieve an optimal solution, namely the edges with endpoints $\{v_1,v_3\}$ for all~$v\in V(H)$ such there is an~$e\in E(H)$ such that~$v_3\in S^{s_e}_\rho$.
	The sets $S^{s_e}_\rho$ can be computed using the oracle for $\krhoTiebreaker$.
	This results in a polynomial-time Turing reduction.
\end{proof}

\section{Structure of the \texorpdfstring{\boldmath$(k,k+1)$}{(k,k+1)}-Setting}\label{sec:k_k+1}
In the following we show that both {\krhoTiebreaker} and {\krhoShortcutProb} are polynomial-time solvable for $\rho=k+1$ on undirected graphs with strictly positive weights, thus addressing the only remaining open case in the undirected setting.

Throughout this section, unless otherwise stated, we consider graphs with strictly positive weights.
We will start with some general definitions not directly targeted at the
$\rho=k+1$ case to ease the exposition of the following lemmas.
Given a graph $G$, $\mathcal{X}\subseteq V(G)$ denotes the set of vertices that do not form a $\krhoBall{\rho}$ in~$G$.
By definition, all $\rho$-closest neighbor sets~$S^v_\rho$ for any vertex
$v\in\mathcal{X}$ contain at least one vertex that is not reachable by a
shortest path with fewest hops originating from $v$ of less than~$k+1$ edges.
A vertex $v\in\mathcal{X}$ has multiple corresponding
$\rho$-closest neighbor sets $S^v_\rho$ whenever $\max_{u\in
S^v_\rho} \dist(u) = \min_{u\in V\setminus S^v_\rho} \dist(u)$.

\begin{definition}[$\rho$-constrained shortest path tree]
	Let $G$ be an undirected graph and $\varphi:V(G)\to\N$ be a strict ordering of its vertices.
    For~$v\in V(G)$, let $S=\bigcup S^v_\rho$ be the union of all $\rho$-closest neighbor sets of~$v$ and~$S^\varphi\subseteq S$ be the subset of $S$ containing exactly the $\rho$ smallest vertices of $S$ according to the lexicographic order induced by the tuple~${(\dist(v,u),\varphi(u))}$, where~$u\in S$.
  The graphs $\rhoconstspt{\rho}{v}{}$ and $\rhoconstspt{\rho}{v}{\varphi}$ are the shortest path trees with fewest hops connecting~$v$ to all other vertices in $S$ and $S^\varphi$, respectively.
\end{definition}

\begin{definition}[$(k,\rho)$-restricted subgraph]
	Let~$G$ be an undirected graph.
	The~$(k,\rho)$-restricted subgraph~$\krhoRestrictedSubgraph{\rho}$ of $G$ has vertex set~$V(G)$ and edge set
	\begin{equation*}
	    E(\krhoRestrictedSubgraph{\rho})= \bigcup_{v\in\mathcal{X}}
	    E(\rhoconstspt{\rho}{v}{}).
	\end{equation*}
	Given a strict ordering of the vertices $\varphi\colon V\to\N$, the graph $\krhoRestrictedSubgraph{\rho}^\varphi$ is the subgraph of $G$ with vertex set $V(G)$ and edge set
	\begin{equation*}
	    E(\krhoRestrictedSubgraph{\rho}^\varphi)=\bigcup_{v\in\mathcal{X}}
	    E(\rhoconstspt{\rho}{v}{\varphi}).
	\end{equation*}
\end{definition}

We will now prove the existence of a polynomial time algorithm for the $\rho=k+1$ case on undirected graphs.
We begin by analyzing the structure of the~$(k,k+1)$-restricted subgraph.

\begin{lemma}\label{lemma:is-path}
	Let $G$ be an undirected graph with strictly positive edge weights and a strict ordering of the vertices~$\varphi$.
    Let~$v\in\mathcal{X}$ for $\rho=k+1$, then the following holds:
    \begin{enumerate}
		\item $\rhoconstspt{k}{v}{}$ is a path.
		\item $\rhoconstspt{k+1}{v}{}$ has height exactly $k+1$, where the root $v$ is at level~$0$.
		\item $\rhoconstspt{k+1}{v}{\varphi}$ is a path.
    \end{enumerate}
\end{lemma}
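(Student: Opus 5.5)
The plan is to exploit one basic monotonicity fact. With strictly positive weights, every vertex that lies strictly inside a shortest path from $v$ to $u$ is strictly closer to $v$ than $u$ is.

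Combined with condition~2 of \cref{defn:rho-closest-neighbor-set}, this gives a closure property. Suppose $u\in S^v_\rho$ for some $\rho$-closest neighbor set. Then every $w$ with $\dist(v,w)<\dist(v,u)$ must also lie in $S^v_\rho$, since otherwise $u\in S^v_\rho$ and $w\notin S^v_\rho$ would violate $\dist(v,u)\le\dist(v,w)$. In particular, all internal vertices of a shortest path to $u$ lie in $S^v_\rho$. I would fix the shortest path trees with fewest hops consistently, so that the tree path from $v$ to $u$ is a fewest-hop shortest path of length $\hopdist(v,u)$. The $\hopdist(v,u)-1$ internal vertices of this path are then all in $S^v_\rho$.

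\textbf{Step 1: a counting bound.} Since $|S^v_\rho|\le\rho=k+1$, the closure property gives, for every $u$ in any $(k+1)$-closest set,
\[
\hopdist(v,u)-1+1\le k+1,
\]
that is, $\hopdist(v,u)\le k+1$. So $\rhoconstspt{k+1}{v}{}$ has height at most $k+1$.

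\textbf{Step 2: the lower bound on the height.} Since $v\in\mathcal{X}$, the vertex $v$ has no $(k,k+1)$-ball. Hence every $(k+1)$-closest set, and in particular at least one of them, contains some $u$ with $\hopdist(v,u)\ge k+1$, so $\hopdist(v,u)=k+1$ by Step~1. This gives height exactly $k+1$ and proves item~2.

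\textbf{Step 3: the structure of a single closest set.} Moreover, for such $u$ in a set $S$, the $k$ internal path vertices $w_1,\dots,w_k$ together with $u$ already make up $k+1$ elements. Therefore $S=\{w_1,\dots,w_k,u\}$, and $S$ induces exactly this path.

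\textbf{Step 4: item~3.} The set $S^\varphi$ is itself a valid $(k+1)$-closest neighbor set: it consists of the $\rho$ smallest vertices of the union in the lexicographic order, which is consistent with distances. So Step~3 applies to it directly, and $\rhoconstspt{k+1}{v}{\varphi}$ is a path of $k+1$ edges.

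\textbf{Step 5: item~1.} Take $S=\{w_1,\dots,w_k,u\}$ from Step~3. Every vertex $x\notin S$ satisfies $\dist(v,x)\ge\dist(v,u)$, and $\dist(v,w_i)<\dist(v,u)$ for all $i$. Hence $w_1,\dots,w_k$ are exactly the $k$ vertices strictly closest to $v$, with a strict gap to all others. This makes the $k$-closest neighbor set unique and equal to $\{w_1,\dots,w_k\}$. Its fewest-hop shortest path tree is the prefix of the path, so $\rhoconstspt{k}{v}{}$ is a path.

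\textbf{Main obstacle.} The delicate point is making the tree well defined and consistent. I would fix one fewest-hop shortest path per vertex, chosen so that these paths form a tree; prefixes of fewest-hop shortest paths are themselves fewest-hop shortest paths, so this is possible. I also need $|R_v|\ge k+1$, which holds because otherwise all reachable vertices would need to be covered and $v\notin\mathcal{X}$ would fail via the same counting argument.
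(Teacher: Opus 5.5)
Your proof is correct, but it is organized differently from the paper's. The paper proves item~1 first, by contradiction. The $k$ predecessors of a depth-$(k+1)$ vertex are closest neighbors reached within $k$ hops, so if $\rhoconstspt{k}{v}{}$ branched, $v$ would reach $k+1$ closest neighbors within $k$ hops, contradicting $v\in\mathcal{X}$. The paper then uses item~1 to exclude height $k+2$. It gets item~3 by noting that the last level of $\rhoconstspt{k+1}{v}{}$ consists of equidistant vertices, of which $\varphi$ keeps one.

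You instead derive everything from one closure-and-counting fact. A $(k+1)$-closest set contains the internal vertices of a fewest-hop shortest path to each of its members. So all members have hop distance at most $k+1$, and, since $v\in\mathcal{X}$, every such set is exactly the vertex set of a single $(k+1)$-edge fewest-hop shortest path. Items~2, 3 and 1 then follow, the last through the strict distance gap that makes the $k$-closest set unique.

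The paper's route is shorter. It also directly shows the ``path plus one level of equidistant ties'' shape of $\rhoconstspt{k+1}{v}{}$ that the definition of $B(v)$ and \cref{lemma:tiebreaking} later rely on. Yours is more explicit about steps the paper leaves implicit: that $S^\varphi$ is a genuine $(k+1)$-closest set, that the $k$-closest set is unique, and the height bound without appeal to item~1.

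Only cosmetic fixes are needed. State the closure property for $w\in R_v$, since it is false for $w=v$. In Step~3, say that the tree on $S\cup\{v\}$ is the path, rather than that $S$ induces it, since $G$ may contain chords.
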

\begin{proof}
  The tree $\rhoconstspt{k+1}{v}{}$ must have height at least $k+1$ as otherwise $v$ would not be a member of $\mathcal{X}$.
	As the edge weights are positive, this implies that $v$ can reach at least $k$ of its closest neighbors within $k$ hops.
	If $\rhoconstspt{k}{v}{}$ is not a path, a level of
	$\rhoconstspt{k}{v}{}$ must contain an additional closest neighbor.
    But then $v$ reaches $k+1$ of its closest neighbors within $k$ hops contradicting the fact that $v\in\mathcal{X}$.    
    As $\rhoconstspt{k}{v}{}$ is a path and all weights are strictly
    positive, $\rhoconstspt{k+1}{v}{}$ cannot have height $k+2$.
    Because $\rhoconstspt{k}{v}{}$ is a path, all vertices on the final
    level of $\rhoconstspt{k+1}{v}{}$ must have the same distance to $v$ by the second condition of \cref{defn:rho-closest-neighbor-set}.
    In $\rhoconstspt{k+1}{v}{\varphi}$, the final level only contains the first vertex according to the order $\varphi$ resulting in a path.
\end{proof}

Intuitively, the next lemma proves that the relation between any pair of
shortest path trees~$\rhoconstspt{k+1}{v}{\varphi},\rhoconstspt{k+1}{u}{\varphi}$ is so constrained that even the graph $\krhoRestrictedSubgraph{k+1}^\varphi$ induced by the union of their edges cannot build any complex structure.
\begin{lemma} \label{lemma:forest}
    Let $G$ be an undirected graph with strictly positive edge weights and a strict ordering of the vertices~$\varphi$.
    Then $\krhoRestrictedSubgraph{k+1}^\varphi$ is a forest.
\end{lemma}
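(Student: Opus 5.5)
The plan is a proof by contradiction using an extremal edge on a cycle. By \cref{lemma:is-path}, every $v\in\mathcal{X}$ contributes a single path $P_v=\rhoconstspt{k+1}{v}{\varphi}$, say $v=x_0,x_1,\dots,x_{k+1}$. Three properties of $P_v$ will be used throughout.
\begin{enumerate}
\item It is a shortest path with fewest hops, so $\dist(v,x_i)$ is strictly increasing in $i$, because weights are strictly positive.
\item Since $\rhoconstspt{k}{v}{}$ is a path, every vertex $u\notin\{x_0,\dots,x_{k}\}$ satisfies $\dist(v,u)>\dist(v,x_i)$ for all $i<k+1$. Equivalently, $x_1,\dots,x_k$ are the unique $k$ closest vertices of $v$.
\item Every $u\notin V(P_v)$ satisfies $\dist(v,u)\ge \dist(v,x_{k+1})$. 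If equality holds, then $\varphi(x_{k+1})<\varphi(u)$.
\end{enumerate}
Undirectedness enters through the symmetry $\dist(a,b)=\dist(b,a)$. This is what lets me compare distances measured from different roots along a common edge.

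Suppose $\krhoRestrictedSubgraph{k+1}^\varphi$ contains a cycle $C$. I pick an extremal edge $e=\{a,b\}\in E(C)$: maximise $w(e)$ first, then break ties by a fixed rule derived from $\varphi$, for instance the larger $\varphi$-value of the endpoints. The edge $e$ lies on some path $P_v$. Assume $a=x_i$ and $b=x_{i+1}$, so $\dist(v,b)=\dist(v,a)+w(e)$.

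Let $\{a,c\}$ be the other cycle edge at $a$. By maximality $w(a,c)\le w(e)$, hence
\[
\dist(v,c)\le \dist(v,a)+w(a,c)\le \dist(v,b).
\]
\begin{itemize}
\item If $c\notin V(P_v)$ and $i+1\le k$, property (2) is violated.
\item If $i+1=k+1$, property (3) forces $w(a,c)=w(e)$ and $\varphi(b)<\varphi(c)$. This should contradict the tie-breaking choice of $e$, since $\{a,c\}$ is then at least as extremal as $e$.
\item Hence $c\in V(P_v)$. The options $c=x_j$ with $j>i+1$ are excluded by strict monotonicity of distances together with $w(a,c)\le w(e)$. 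So $c=x_{i-1}$, i.e.\ the cycle continues along $P_v$ backwards from $a$.
\end{itemize}

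I then iterate this argument, walking along $C$ from $a$ away from $b$. As long as the cycle stays on $P_v$, it moves toward $v$ along $P_v$. Since $C$ is a cycle, it must leave $P_v$ at some vertex $x_j$ via an edge $f=\{x_j,y\}$ that belongs to a different path $P_u$, with $w(f)\le w(e)$. I want to show this is impossible. For this I would apply the same comparison from $P_u$'s perspective: maximality of $e$ bounds every edge of $C$. I would also use the symmetric argument at $b$, where the other cycle edge at $b$ must likewise lead back toward $v$ or contradict the choice of $e$. Together these should force $C$ to be entirely contained in $P_v$, which is a path. That is a contradiction.

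The main obstacle is exactly this hand-off between paths of different roots. There I expect to need a careful choice of the extremal object, perhaps the edge $e$ together with the root $v$ maximising $\dist(v,b)$. I also need to treat the level-$(k+1)$ ties, where the $\varphi$-ordering must be used consistently across different roots. If the single-edge extremal argument does not close, my fallback is the following: orient each path edge away from its root, show that the orientations from different roots never conflict, and show that every vertex has in-degree at most one in the union. That would make the union a forest directly.
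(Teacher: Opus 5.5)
Your setup (a maximum-weight cycle edge $e=\{a,b\}$ with a $\varphi$-based tie-break, its root $v$, and properties (1)--(3) of $P_v$) matches the paper's. However, the proposal stops exactly where the proof has to close, and the route you sketch for closing it points the wrong way. You expect $C$ to leave $P_v$ and plan a hand-off to another root's path. No hand-off is needed, and in fact $C$ can never leave $P_v$. All distances in your comparison are distances in $G$ from the single root $v$, and every edge of $C$ has weight at most $W=w(e)$, whichever path contributed it.

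So once the other cycle neighbour $c$ of $a$ lies on $P_v$ strictly before $a$, walk along $C$ from $c$ away from $a$. If the current vertex $z$ lies on $P_v$ strictly before $a$, the next vertex $y$ satisfies $\dist(v,y)\le\dist(v,z)+W<\dist(v,a)+W=\dist(v,b)$. By your properties (2) and (3), $y$ then lies on $P_v$ before $b$, and $y\neq a$ because vertices of $C$ are distinct. Inductively, every vertex of $C$ other than $a,b$ lies on $P_v$ strictly before $a$. This includes the other cycle neighbour $d$ of $b$, which gives $\dist(v,b)\le\dist(v,d)+W<\dist(v,b)$. The contradiction comes from returning to $b$, not from leaving $P_v$; this is exactly the inductive argument of the paper's Case (1).

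Two smaller points about this step. You do not need $c=x_{i-1}$, and you did not justify it either, since you only excluded $j>i+1$; $c\in\{x_0,\dots,x_{i-1}\}$ suffices. Your fallback would fail. On the unit-weight path on vertices $1,2,\dots,6$ with $k=2$, $\rho=3$, the paths $P_1=(1,2,3,4)$ and $P_6=(6,5,4,3)$ traverse $\{3,4\}$ in opposite directions, and vertex $3$ gets in-degree two.

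The second gap is the tie case in your second bullet. Under the rule ``larger $\varphi$-value of the endpoints'', $\{a,c\}$ merely ties with $e$ when $\varphi(a)>\varphi(c)>\varphi(b)$, so nothing is contradicted. Your choice of $e$ does not exclude this situation, namely $b=x_{k+1}$, $c\notin V(P_v)$, $w(a,c)=W$, $\dist(v,c)=\dist(v,b)$. This is precisely the subcase in which the paper, which picks its extremal edge the same way, switches to the root $u$ whose path contains $\{a,c\}$ and reruns the argument from $u$.

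You can avoid that switch by making the order on edges strict: after the weight, compare the larger endpoint value and then the smaller one. Maximality of $e$ over $\{a,c\}$ then forces $\varphi(c)<\varphi(b)$ whenever $w(a,c)=W$. Hence $c$ precedes $b$ in the $(\dist(v,\cdot),\varphi)$-order, lies on $P_v$, and lies strictly before $a$. Combined with the walk above, this gives a single-case, single-root proof that is somewhat cleaner than the paper's. As written, though, both the tie case and the closing step are left open.
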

\begin{proof}
    Assume for contradiction that there exists a cycle $C=(c_0,c_1,\ldots,c_{t-1},c_0)$ in $\krhoRestrictedSubgraph{k+1}^\varphi$.
    To simplify the notation, we view all indices modulo $t$.
    Let $W$ be the maximum weight of an edge in the cycle.
    Let $0\leq i\leq t-1$ be such that $\{c_i,c_{i+1}\}$ has weight $W$ and maximizes $\max\{\varphi(c_i),\varphi(c_{i+1})\}$ among all edges of weight $W$ in the cycle.
    Let $v\in \mathcal{X}$ be such that the edge $\{c_{i},c_{i+1}\}$ is
    contained in $E(\rhoconstspt{k+1}{v}{\varphi})$, i.e.~$v$ added $\{c_{i},c_{i+1}\}$ to $\krhoRestrictedSubgraph{k+1}^\varphi$.
    Without loss of generality, assume that $\varphi(c_i)>\varphi(c_{i+1})$.
    By \cref{lemma:is-path}, $\rhoconstspt{k+1}{v}{\varphi}$ is a path which we subsequently call $P_v$.
    We distinguish two cases:
    
    \textbf{Case (1):} $c_{i+1}$ is the parent of $c_i$ in $P_v$.
    Then $\dist(v,c_{i+2}) \leq \dist(v,c_{i+1})+W = \dist(v,c_i)$ but $c_{i+2}$ has to be visited first as $\varphi(c_i)$ is maximal.
    Thus, $c_{i+2}$ is visited before $c_i$ and in extension
    before $c_{i+1}$ which is visited by $P_v$ immediately
    before $c_i$. Proceed inductively:
    Let~$j$ be such that $c_j$ is visited before~$c_{i+1}$.
    Then $\dist(v,c_j) < \dist(v,c_i)$ due to strictly positive weights.
    Thus, $\dist(v,c_{j+1})\leq \dist(v,c_j)+W < \dist(v,c_{i})+W$ and
    $c_{j+1}$ must have also been visited before $c_{i}$, implying it is
    also visited before $c_{i+1}$.
    This is a contradiction as all vertices of the cycle are visited without traversing the edge $\{c_i,c_{i+1}\}$.
    
    \textbf{Case (2):} $c_i$ is the parent of $c_{i+1}$ in $P_v$.
    Then $\dist(v,c_{i+1}) =
    \dist(v,c_i)+W$ and we have $\dist(v,c_{i-1}) \leq
    \dist(v,c_i)+w(\{c_i,c_{i-1}\})$.
    Assume $w(\{c_{i-1},c_i\})<W$ or~${\varphi(c_{i-1}) < \varphi(c_{i+1})}$.
    Then $c_{i-1}$ must be visited by $P_v$ before $c_{i+1}$.
    Since $P_v$ traverses $c_i$ immediately before~$c_{i+1}$,~$c_{i-1}$ is visited before $c_i$.
    Then we can derive a contradiction by using the same inductive argument
    as in case (1).

    Now assume $w(\{c_{i-1},c_i\})=W$ and $\varphi(c_{i-1}) > \varphi(c_{i+1})$.
    Consider the edge $\{c_{i-1},c_{i}\}$ and let $u\in \mathcal{X}$ be
    such that the edge $\{c_{i-1},c_{i}\}$ is contained in $P_u\coloneqq
    \rhoconstspt{k+1}{u}{\varphi}$.
    Assume that the path~$P_u$ visits~$c_i$ before~$c_{i-1}$.
    Then $\dist(v,c_{i-1}) = \dist(v,c_i)+W = \dist(v,c_i)+w(\{c_i,c_{i+1}\})$.
    By assumption, $\varphi(c_{i-1}) > \varphi(c_{i+1})$ holds, so $c_{i+1}$ must be visited before $c_i$.
    Therefore, this case is symmetric to the first subcase of case (2) and we are done.
    Otherwise,~$c_{i-1}$ is visited before $c_i$ by $P_u$.
    Then the argument of case (1) applies.
\end{proof}

\begin{lemma}\label{lemma:small-shortcuts}
    Given a shortcut set~$\mathcal{S}$ transforming an undirected graph~$G$ into a~$(k,k+1)$-graph, we can construct an equivalent shortcut set~$\mathcal{S}'$ such that $\hopdist(u,v) = 2$ for all shortcuts~${\{u,v\} \in \mathcal{S}'}$ and~$|\mathcal{S}'|\leq|\mathcal{S}|$.
    For directed graphs, the analogous statement holds.
\end{lemma}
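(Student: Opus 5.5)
We replace each shortcut of hop distance at least $3$ by one of hop distance exactly $2$, at a cost of at most one new shortcut per removed one, and show that every vertex keeps a $(k,k+1)$-ball. Since $\rho=k+1$, a vertex $x$ with a $(k,k+1)$-ball in $G\cup\mathcal{S}$ needs at most one of its $k+1$ chosen closest neighbors to be reached via a shortcut. The other $k$ closest neighbors are already reachable within $k$ hops in $G$: either $x\notin\mathcal{X}$, or, by the argument of \cref{lemma:is-path}, $\rhoconstspt{k}{x}{}$ is a path of length at most $k$. Thus the only deficit of $x\in\mathcal{X}$ is a single vertex $y$ on level $k+1$ of $\rhoconstspt{k+1}{x}{}$, and any closest-neighbor vertex is reached in $G$ within $k+1$ hops. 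So every vertex $x\in\mathcal{X}$ needs exactly one hop saved on a fewest-hop shortest path to some such $y$. We first argue that $\mathcal{S}$ may be assumed minimal in this sense: each shortcut serves at least one $x\in\mathcal{X}$ by lying on an $x$--$y$ path of at most $k$ hops.

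\textbf{Replacement.} Take $\{u,w\}\in\mathcal{S}$ with $h=\hopdist_G(u,w)\ge 3$. Fix a fewest-hop shortest path $u=p_0,p_1,\dots,p_h=w$ in $G$, and replace $\{u,w\}$ by $\{p_0,p_2\}$. This is a valid shortcut: $\dist(p_0,p_2)$ equals the weight of the subpath, and its hop distance is $2$ because the path is fewest-hop. We then check every $x\in\mathcal{X}$ served by $\{u,w\}$. Such an $x$ has a shortest path to its target $y$ with at most $k$ hops that uses the edge $\{u,w\}$. Expanding this edge yields a shortest $x$--$y$ walk with $k-1+h$ hops in $G$. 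On the other hand, any shortest path from $x$ to $y$ in $G$ has at least $k+1$ hops, and a fewest-hop one has exactly $k+1$. Hence only a saving of one hop is ever needed. Replacing the segment $p_0p_1p_2$ by the new shortcut saves one hop on any shortest $x$--$y$ path passing through $p_0,p_1,p_2$.

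The subtle point is that the expanded walk has $k-1+h>k+1$ hops, so it is not fewest-hop; we must re-route. We take a fewest-hop shortest path $Q$ from $x$ to $y$ in $G$, with $k+1$ hops. Shortest-path segments can be exchanged, but the shortcut must lie on $Q$. So instead of a fixed $p_0p_2$, we choose the replacement depending on the served vertices. Here is the key observation. The $x$--$y$ shortest path through $u$ and $w$ with at most $k$ hops in $G\cup\mathcal{S}$ has at most $k-1$ hops outside the shortcut, yet the $G$-distance requires at least $k+1$ hops, so $h\ge 2$. If $h\ge3$, then replacing the $u$--$w$ portion by the fewest-hop $G$-path gives at most $k-1+h$ hops. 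Instead, we take the fewest-hop structure: the vertex $p$ on $Q$ at hop $k-1$ from $x$ and $y$ give $\{p,y\}$ of hop distance $2$. This works individually for each $x$, but one original shortcut may serve several $x$ with different targets, which would cost several new shortcuts.

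\textbf{Main obstacle: charging.} We must avoid a single long shortcut being replaced by several short ones. Our first attempt fixes $\{u,w\}$ and shows that all served $x$ can use the same pair. Since each served $x$ has $\{u,w\}$ within its $k$ hops while all $G$-paths have $k+1$ hops, $h\le 2+(\text{slack})$. More precisely, the hops of $x$ to $u$ plus $h$ plus the hops of $w$ to $y$ is at least $k+1$, while the hops of $x$ to $u$ plus $1$ plus the hops of $w$ to $y$ is at most $k$. This forces $h\ge 2$, and shortest subpaths of the fewest-hop path give $k+1$ total only if $h=2$, up to replacing the $x$--$u$ and $w$--$y$ parts by fewest-hop ones. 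Indeed, concatenating a fewest-hop $x$--$u$ path, the fewest-hop $u$--$w$ path and a fewest-hop $w$--$y$ path gives a shortest $x$--$y$ path in $G$ with at most $k-1+h$ hops, which must be at least $k+1$. If instead $h\ge3$, the fewest-hop count is at most $(k-1)+h$, which is no contradiction. In that case, we use that $x\in\mathcal{X}$ has a path-shaped $\rhoconstspt{k}{x}{}$ (\cref{lemma:is-path}), and try to show that all vertices served by $\{u,w\}$ share the fixed segment $p_0p_1p_2$ at the $u$-end or at the $w$-end. If this fails, we fall back to a direct counting argument, or to restricting to minimal solutions. This step is where I expect the real difficulty. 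The directed case is identical, with orientations respected.
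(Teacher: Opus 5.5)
Your first move, replacing $\{u,w\}$ by $\{p_0,p_2\}$, is essentially the paper's move; the paper uses the other end of the path, $\{p_{h-2},w\}$. However, the proposal never proves the point that matters: that this single replacement still serves \emph{every} $x\in\mathcal{X}$ that relied on $\{u,w\}$. You explicitly leave this as ``the real difficulty'', and the fallbacks you list (a counting argument, minimal solutions) are not arguments.

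The reasoning that led you to abandon the fixed replacement is also mistaken. You conclude that the expanded walk has $k-1+h>k+1$ hops and is therefore ``not fewest-hop'', so one must re-route. But every shortest $x$--$y$ path in $G$ has exactly $k+1$ hops (see below), so if that count were right it would be a contradiction, not a reason to re-route. The count is wrong because it assumes the $(G\cup\mathcal{S})$-path has exactly $k$ hops and uses no shortcut other than $\{u,w\}$. A value $h\geq 3$ only means that path had fewer hops or used further shortcuts. No re-routing and no per-$x$ choice of replacement is needed.

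The missing idea is uniqueness of shortest paths.
\begin{itemize}
\item Let $x\in\mathcal{X}$ and let $x=q_0,q_1,\dots,q_k$ be the path $T^x_k$ from \cref{lemma:is-path}. Let $y\in B(x)$, so $S=\{q_1,\dots,q_k,y\}$ is a $(k+1)$-closest neighbor set of $x$.
\item By strict positivity, every vertex $p\neq y$ on a shortest $x$--$y$ path satisfies $\dist(x,p)<\dist(x,y)$. By condition~2 of \cref{defn:rho-closest-neighbor-set}, $p$ lies in $S\cup\{x\}$, i.e.\ among $q_0,\dots,q_k$.
\item Distances strictly increase along the $q_i$ and along any shortest path. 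So each shortest $x$--$y$ path is a subsequence of $q_0,\dots,q_k,y$. Since $\hopdist_G(x,y)=k+1$, it must be the whole sequence, and the shortest $x$--$y$ path in $G$ is unique.
\item Suppose $x$ reaches $y$ within $k$ hops in $G\cup\mathcal{S}$ along a shortest path using $\{u,w\}$. Expanding all shortcuts on it yields this unique path, so $u$ and $w$ lie on it.
\item The shortest $u$--$w$ path in $G$ is then unique as well, since otherwise splicing would give a second shortest $x$--$y$ path. Hence your fixed $P=(p_0,\dots,p_h)$ is a subpath of $x$'s path, for every served $x$ simultaneously.
\item Adding $\{p_0,p_2\}$ alone lowers $\hopdist(x,y)$ from $k+1$ to $k$, independently of the other shortcuts.
\end{itemize}
This gives the one-for-one exchange with no charging problem. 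It is exactly what justifies the paper's unproved step that the fewest-hop path of $z$ ``contains $P_{u,v}$ as a subpath''; after that, the paper's hop-count inequality finishes the proof. The same argument works unchanged for directed graphs.
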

\begin{proof}
    Let~$\{u,v\} \in \mathcal{S}$ with~$\hopdist(u,v)>2$.
    Then~$\{u,v\}$ lies on some shortest path with fewest hops~$P_{u,v}=(p_0=u,p_1,p_2,\ldots,p_i=v)$ in~$G$.
    Assume for contradiction that replacing the shortcut~$\{u,v\}$
    with~$\{p_{i-2},v\}$ prevents a formerly present~$(k,k+1)$-ball for
    some~$z \in V$ because some vertex~$t \in V$ becomes unreachable from~$z$ within~$k$ hops.
    Since $t$ is the $(k+1)$-th vertex in the shortest path with fewest hops of $z$, we have $\hopdist_{G}(z,t)=k+1$.
    As $\{u,v\}$ is used by $z$, the shortest path with fewest hops of $z$ contains $P_{u,v}$ as a subpath.
    This implies
    \begin{align*}
			k+1&=\hopdist_{G}(z,t)=\hopdist_{G}(z,u)+\hopdist_{G}(u,v)+\hopdist_G(v,t)\\
			 &>\hopdist_{G}(z,u)+\hopdist_{G\cup\{\{p_{i-2},v\}\}}(u,v)+\hopdist_{G}(v,t).
    \end{align*}
    But then $\hopdist_{G\cup\{\{p_{i-2},v\}\}}(z,t) \leq k$, and $z$ can reach $t$ with less than~$k+1$ hops in~${G\cup\{\{p_{i-2},v\}\}}$, a contradiction.
    Thus, any shortcut~$\{u,v\} \in \mathcal{S}$ can be replaced by a shortcut that skips exactly one vertex.
\end{proof}

\begin{definition}[Tiebreaking candidates]
    For any vertex $v\in\mathcal{X}$, let
    \begin{equation*}
      B(v) = \left\{u\,|\, u\in V(\rhoconstspt{k+1}{v}{})\setminus
      V(\rhoconstspt{k}{v}{})\right\}
    \end{equation*}
    be the set of vertices that are present in the $(k+1)$-constrained shortest path tree of $v$ but not in the $k$-constrained shortest path tree.
\end{definition}
Observe, that $\rhoconstspt{k+1}{v}{\varphi}$ contains exactly one vertex $u\in B(v)$ which is decided upon based on the strict ordering $\varphi$.

\begin{lemma}\label{lemma:tiebreaking}
	Let $G$ be an undirected graph and let $u,x\in V(G)$ with $u\neq x$.
	Let $\mathcal{P}$ be the subset of vertices $v\in\mathcal{X}$
	satisfying that the path $\rhoconstspt{k}{v}{}$ has $x$ and $u$ as second last and last vertex respectively.
	Then for any pair of distinct vertices $i,j\in\mathcal{P}$ it holds that $B(i)=B(j)$.
\end{lemma}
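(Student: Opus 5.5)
The plan is to show that $B(v)$ is determined by $u$ and $x$ alone, by giving an explicit description of $B(v)$ in terms of the neighbourhood of $u$; then $B(i)=B(j)$ for $i,j\in\mathcal{P}$ is immediate.

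\textbf{Step 1: fix the shape of the trees.} Fix $v\in\mathcal{P}$. By \cref{lemma:is-path}, $\rhoconstspt{k}{v}{}$ is a path $v=p_0,p_1,\dots,p_k$ with $p_{k-1}=x$ and $p_k=u$. Also by \cref{lemma:is-path}, $\rhoconstspt{k+1}{v}{}$ has height exactly $k+1$, so $B(v)$ is precisely its level $k+1$.

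\textbf{Step 2: every vertex of $B(v)$ hangs off $u$.} Since $\rhoconstspt{k}{v}{}$ is a path, the only vertex on level $k$ is $u$. Hence every $b\in B(v)$ has parent $u$, and
\[
\dist(v,b)=\dist(v,u)+w(\{u,b\}).
\]
By the second condition of \cref{defn:rho-closest-neighbor-set}, all $b\in B(v)$ have the same distance $D_v$ from $v$. Moreover, $B(v)$ is exactly the set of vertices outside $\{p_0,\dots,p_k\}$ that attain the minimum distance from $v$ among all such vertices.

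\textbf{Step 3: reduce to weights at $u$.} Take any neighbour $y$ of $u$ that is not on the path. Then
\[
\dist(v,y)\le \dist(v,u)+w(\{u,y\}),
\]
and every non-path vertex satisfies $\dist(v,y)\ge D_v$. Together these give
\[
B(v)=\{\,y\in N(u)\setminus\{p_0,\dots,p_k\} : w(\{u,y\})=\mu_v\,\},
\]
where $\mu_v$ is the minimum weight of an edge from $u$ to a non-path vertex. The remaining task is to remove the dependence on the path vertices $p_0,\dots,p_{k-2}$, which are the only part of the path not shared by all members of $\mathcal{P}$. The goal is
\[
B(v)=\{\,y\in N(u)\setminus\{x\} : w(\{u,y\})=\mu\,\},
\qquad
\mu=\min_{y\in N(u)\setminus\{x\}} w(\{u,y\}),
\]
and this set depends only on $u$ and $x$.

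\textbf{Step 4 (main obstacle): earlier path vertices are not light neighbours of $u$.} I must show that no $p_j$ with $j\le k-2$ is a neighbour of $u$ with $w(\{u,p_j\})\le \mu_v$. If $p_j$ were adjacent to $u$, then minimality of hops on a shortest path forces the strict inequality
\[
\dist(v,u)<\dist(v,p_j)+w(\{p_j,u\}),
\]
since equality would give a path to $u$ with fewer hops. Hence
\[
w(\{p_j,u\})>\dist(v,u)-\dist(v,p_j)\ge w(\{x,u\})+w(\{p_{k-2},x\}).
\]
I would first try to combine this with undirected symmetry: distances from $u$ back along the path equal distances from the path to $u$. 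The idea is to compare $w(\{u,p_j\})$ against $w(\{u,b\})$ for $b\in B(v)$, arguing that if $p_j$ were strictly lighter, then $p_j$ would have been reached from $v$ more cheaply via $u$, or the $(k+1)$-th closest vertex of $v$ would change. If this direct comparison fails, the fallback is to prove the weaker but sufficient statement for two members $i,j\in\mathcal{P}$. Here I would symmetrically compare a candidate $y\in B(i)\setminus B(j)$: such a $y$ must be a path vertex of $j$, and the two displayed distance inequalities for $i$ and $j$, together with the strict inequality above, should yield a contradiction. I expect this comparison of two different paths through the common final edge $\{x,u\}$ to be the delicate part. It is also where strict positivity of the weights is essential.
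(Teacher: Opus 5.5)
Your Steps 1--3 are correct. For $v\in\mathcal{P}$, every $b\in B(v)$ is adjacent to $u$ with $\dist(v,b)=\dist(v,u)+w(\{u,b\})$, and $B(v)$ is the set of minimum-weight neighbours of $u$ off the path $\rhoconstspt{k}{v}{}$.

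\textbf{The gap is Step 4.} The claim you set out to prove there is false: that no $p_j$ with $j\le k-2$ is a neighbour of $u$ with $w(\{u,p_j\})\le\mu_v$. Your target formula for $B(v)$ in terms of $N(u)\setminus\{x\}$ fails with it. Take $k=2$ and the undirected graph on $v,x,u,b$ with $w(\{v,x\})=w(\{x,u\})=1$, $w(\{v,u\})=5$, $w(\{u,b\})=10$.
Then $\dist(v,x)=1$, $\dist(v,u)=2$, $\dist(v,b)=12$ and $\hopdist(v,b)=3$, so $v\in\mathcal{X}$, $\rhoconstspt{2}{v}{}$ is the path $v,x,u$, and $B(v)=\{b\}$. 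Yet $p_0=v$ is adjacent to $u$ with $w(\{u,v\})=5<10=\mu_v$, and your formula returns $\{v\}$.

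Your heuristic that a lighter $p_j$ ``would have been reached more cheaply via $u$'' has no force. The vertex $p_j$ is already reached more cheaply along the path and is already among the $k$ closest vertices.

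The fallback does not close the gap either. It asserts that $y\in B(i)\setminus B(j)$ must lie on $\rhoconstspt{k}{j}{}$. This overlooks the case where $y$ is off that path but $\mu_j<\mu_i$. That is the first case of the paper's proof, where one must show that the lighter vertex of $B(j)$ lies on $\rhoconstspt{k}{i}{}$. The fallback also only says the inequalities ``should'' give a contradiction, without naming the fact that produces it.

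\textbf{The missing idea} is to compare $w(\{u,\cdot\})$ with $\dist(u,\cdot)$ rather than with $\mu_v$.
\begin{itemize}
\item Since $p_j$ lies on a shortest $v$--$u$ path, $\dist(v,u)-\dist(v,p_j)=\dist(p_j,u)$. Your own strict inequality therefore says exactly $w(\{p_j,u\})>\dist(p_j,u)$ for $j\le k-2$: an edge from an earlier path vertex to $u$ is never a shortest path.
\item For $b\in B(v)$, the edge $\{u,b\}$ is the last edge of a shortest $v$--$b$ path, so $w(\{u,b\})=\dist(u,b)$.
\end{itemize}
In an undirected graph these two facts concern the same quantity. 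Set $N^*(u)=\{y\in N(u): w(\{u,y\})=\dist(u,y)\}$. Then
$B(v)=\{y\in N^*(u)\setminus\{x\}: w(\{u,y\})=\min_{z\in N^*(u)\setminus\{x\}}w(\{u,z\})\}$,
which depends only on $u$ and $x$. This completes your intended explicit-description route and avoids any case analysis.

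The paper instead argues by contradiction, splitting on why $y\in B(i)$ misses $B(j)$: either there is a strictly lighter $z\in B(j)$, which is then forced onto $\rhoconstspt{k}{i}{}$, or $y$ itself lies on $\rhoconstspt{k}{j}{}$. Both cases are closed by exactly this shortest-edge fact.
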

\begin{proof}
    Assume for contradiction that there exists pair of vertices $i,j\in\mathcal{P}$ such that $B(i)\setminus B(j)$ is non-empty.
    Let $y\in B(i)\setminus B(j)$.
    By \cref{lemma:is-path}, $\rhoconstspt{k+1}{i}{}$ has height $k+1$, so
    $y$ is in the~$(k+1)$-th level of $\rhoconstspt{k+1}{i}{}$ and
    $\dist(i,y) \leq \min_{t\in V\setminus V(\rhoconstspt{k}{v}{})}\dist(i,t)$.
    
    \textbf{Case 1:} $y\notin B(j)$ holds because there exists a vertex $z\in B(j)$ with ${\dist(j,z) < \dist(j,y)}$.
    Since $j\in\mathcal{X}$, $\rhoconstspt{k+1}{j}{}$ is also a tree of height $k+1$.
    By assumption, the last vertex of both~$\rhoconstspt{k}{i}{}$
    and~$\rhoconstspt{k}{j}{}$ is~$u$.
    Hence, we have $\dist(j,z) = \dist(j,u)+w(\{u,z\}) < \dist(j,u)+w(\{u,y\})$.
    But then, $z\in V(\rhoconstspt{k}{i}{})$ as otherwise $y\notin B(i)$ contrary to the assumption.
    In addition, $z$ must be a neighbor of $u$ due to $z\in B(j)$ and~$j\in\mathcal{X}$.
    Thus, we have $\dist(i,u) = \dist(i,z)+\dist(z,u) <
    \dist(i,z)+w(\{z,u\}$ since $u$ was entered through the edge~$\{x,u\}$
    by both $\rhoconstspt{k}{i}{}$ and $\rhoconstspt{k}{j}{}$.
    But then $\dist(j,u)+w(\{z,u\} > \dist(j,u)+\dist(u,z)$ hence $z$
    cannot possibly be at level $k+1$ of~$\rhoconstspt{k+1}{j}{}$ contradicting the assumption that $z\in B(j)$.
    
    \textbf{Case 2:} $y\notin B(j)$ holds because $y\in
    V(\rhoconstspt{k}{j}{})$.
    Due to $y\in B(i)$ and $i\in\mathcal{X}$, $y$ is a neighbor of $u$.
    Hence, $\dist(j,y)+\dist(y,u) < \dist(j,y)+w(\{y,u\})$.
    But then, there is a shorter path from $y$ to $u$ not using the edge
    $\{y,u\}$ contradicting the fact that $\rhoconstspt{k+1}{i}{}$ is shortest path tree of height~$k+1$.
\end{proof}

Let us consider the following algorithm which produces an optimal solution for the {\minkrhoShortcutProb} problem:
Encode each possible shortcut~$\{u,v\}$ as a set~$E_{u,v} = \{x\in\mathcal{X}\,|\,x\textnormal{ forms a $(k,k+1)$-ball in $G\cup\{\{u,v\}\}$}\}$ and find a minimal set-cover over the resulting set-system with universe $\mathcal{X}$.
A minimal set-cover of the universe $\mathcal{X}$ now corresponds to a globally minimal shortcut set.

\begin{theorem} \label{theorem:tiebreak}
	Let $G$ be an undirected graph with strictly positive weights.
	Then any strict ordering of the vertices $\varphi$ optimally solves the $(k,k+1)\text{-}\mathrm{Tiebreaker}$ problem.
\end{theorem}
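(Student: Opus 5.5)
Fix an arbitrary strict ordering $\varphi$ and take as tiebreak choice $S^v_{k+1}=V(\rhoconstspt{k+1}{v}{\varphi})\setminus\{v\}$ for every $v\in\mathcal{X}$; vertices outside $\mathcal{X}$ already have a $(k,k+1)$-ball and need nothing. The goal is to show that some optimal shortcut set realises exactly these neighbour sets. I would start from an arbitrary optimal shortcut set $\mathcal{S}$ and transform it, without increasing its size, into one under which every $v\in\mathcal{X}$ reaches its $\varphi$-chosen vertex of $B(v)$ within $k$ hops.

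\textbf{Step 1 (normal form).} By \cref{lemma:small-shortcuts}, assume every shortcut in $\mathcal{S}$ skips exactly one vertex. For $v\in\mathcal{X}$, \cref{lemma:is-path} says $\rhoconstspt{k}{v}{}$ is a path $v=p_0,\dots,p_{k-1}$ whose last vertex $u=p_{k-1}$ is reached from $x=p_{k-2}$. Every candidate $y\in B(v)$ hangs below $u$ at level $k$. Hence $v$ has a $(k,k+1)$-ball in $G\cup\mathcal{S}$ exactly when some two-hop shortcut shortens the path $v\to u\to y$ for some $y\in B(v)$. The possible shortcuts are $\{p_{j},p_{j+2}\}$ for $j\le k-3$, $\{x,u\}$-type shortcuts spanning $p_{k-3},u$, and $\{x,y\}$ with $y\in B(v)$.

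\textbf{Step 2 (group by $(x,u)$).} Partition $\mathcal{X}$ into the classes $\mathcal{P}_{x,u}$ from \cref{lemma:tiebreaking}. All members of a class share the same candidate set $B$, so the $\varphi$-choice $y^\ast$, the $\varphi$-minimal element of $B$, is the same for the whole class. Within a class, any shortcut that skips a vertex strictly before $u$ on the path serves $v$ for every $y\in B$, including $y^\ast$. So the only shortcuts that commit to a particular tiebreak are those of the form $\{x,y\}$ with $y\in B$. I would replace each such shortcut $\{x,y\}$, $y\neq y^\ast$, by $\{x,y^\ast\}$. This is a valid shortcut, since $\dist(x,y^\ast)=\dist(x,u)+w(u,y^\ast)$ holds because $y^\ast$ lies at level $k$ below $u$ in a shortest path tree passing through $x$.

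\textbf{Step 3 (no loss elsewhere).} The main obstacle is to show that this exchange does not destroy the ball of some other vertex $z$ that used $\{x,y\}$ with a different role, for instance when $y$ is an interior vertex of $z$'s path. Here I would invoke \cref{lemma:forest}: because $\krhoRestrictedSubgraph{k+1}^\varphi$ is a forest, and each $\rhoconstspt{k+1}{z}{\varphi}$ is a path in it, the only paths containing the edge pattern $x,u,y$ are those of the class $\mathcal{P}_{x,u}$ (or paths reversing it). I would handle these by a case analysis mirroring the proof of \cref{lemma:forest}, showing that $x,u,y$ cannot appear as an interior segment of another $\varphi$-path with $y\ne y^\ast$. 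If the forest argument only covers $\varphi$-paths, I would first make the replacements for all classes simultaneously, each shortcut now lying along the forest. Then I would argue directly that every $z\in\mathcal{X}$ whose ball was certified before is still certified by a shortcut in its own $\varphi$-path, and that the total count does not increase, since each replaced shortcut maps to one shortcut. The resulting set is optimal and realises the $\varphi$-tiebreak, which proves the theorem.
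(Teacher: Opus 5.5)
The gap is in your Step~3, and it cannot be closed, because the statement itself fails in exactly the case you defer. Your exchange is the same one the paper makes: its proof maps each shortcut of an optimal solution that commits to a tied candidate onto the shortcut towards the common $\varphi$-choice, and uses \cref{lemma:tiebreaking} for the class. (Your indexing is off by one: $\rhoconstspt{k}{v}{}$ has vertices $p_0,\dots,p_k$ and $B(v)$ lies at level $k+1$. That is cosmetic.)

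\cref{lemma:forest} cannot deliver what you ask of it. Acyclicity of $\krhoRestrictedSubgraph{k+1}^\varphi$ says nothing about the position or orientation in which a segment $x,u,y$ occurs in different paths. Moreover, the vertices that actually use $\{x,y\}$ with $y\neq y^\ast$ route through the edge $\{u,y\}$, which the $\varphi$-paths of $\mathcal{P}_{x,u}$ do not even contain. For the forward orientation, the idea that works, and the one the paper uses, is the tie itself. Since $y$ and $y^\ast$ hang below $u$ at equal distance, $w(\{u,y\})=w(\{u,y^\ast\})$. Suppose some $t\in\mathcal{X}$ traversed $x,u,y$ with $y\in V(\rhoconstspt{k}{t}{})$. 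Then $\dist(t,y^\ast)\le\dist(t,u)+w(\{u,y^\ast\})=\dist(t,y)$ forces $y^\ast$ into $\rhoconstspt{k}{t}{}$ as well. With positive weights this is incompatible with $\rhoconstspt{k}{t}{}$ being a path through $x,u,y$ (\cref{lemma:is-path}). Hence $t\in\mathcal{P}_{x,u}$ and \cref{lemma:tiebreaking} applies.

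The reversed orientation, which you flag ("or paths reversing it"), cannot be handled at all. There one only gets $\dist(t,y^\ast)\le\dist(t,y)+2w(\{u,y\})$, and the exchange genuinely fails. Take $k=2$, $\rho=3$, and the tree with edges $\{a,x\},\{x,u\}$ of weight~$1$, $\{u,y\},\{u,y^\ast\}$ of weight~$5$, $\{y,t\}$ of weight~$1$, plus two pendant vertices attached to $y^\ast$ by edges of weight~$1$.

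\begin{itemize}
\item From $a$: $\dist(a,x)=1$, $\dist(a,u)=2$, $\dist(a,y)=\dist(a,y^\ast)=7$ at hop distance $3$. So $a\in\mathcal{X}$ with $B(a)=\{y,y^\ast\}$.
\item From $t$: $\dist(t,y)=1$, $\dist(t,u)=6$, $\dist(t,x)=7$ with $\hopdist(t,x)=3$. So $t\in\mathcal{X}$.
\item Every other vertex already has a $(2,3)$-ball.
\end{itemize}

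The single shortcut $\{x,y\}$ of weight~$6$ repairs both $a$ (via $a,x,y$) and $t$ (via $t,y,x$), so the optimum is~$1$. Now let $\varphi(y^\ast)<\varphi(y)$. The shortcuts bringing $y^\ast$ within two hops of $a$ are $\{x,y^\ast\},\{a,u\},\{a,y^\ast\}$. Those bringing $x$ within two hops of $t$ are $\{y,x\},\{t,u\},\{t,x\}$. These sets are disjoint, so no optimal solution realizes the $\varphi$-choice. Your replacement $\{x,y\}\mapsto\{x,y^\ast\}$ is exactly what destroys the ball of $t$. Here $\krhoRestrictedSubgraph{k+1}^\varphi$ is acyclic, so the forest lemma holds without helping.

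The paper's proof has the same blind spot. It asserts that every vertex served by the shortcut has $\rhoconstspt{k}{t}{}$ ending with $x,u$, whereas here $\rhoconstspt{k}{t}{}$ is $t,y,u$. So the statement for an arbitrary $\varphi$ is false as written, and no completion of your Step~3 can exist.
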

\begin{proof}
	Suppose $\varphi$ is a strict ordering of the vertices of $G$.
    Consider the graph $\krhoRestrictedSubgraph{k+1}^\varphi$ and let us call the resulting set system for the set-cover instance $S$. We aim to show now that for each set in the set-system of $OPT$, that is an optimal tiebreaker assignment, there is a set in $S$ which contains the set of $OPT$.

    Take any set $E_{u,v} \in OPT$ corresponding to a shortcut $\{u,v\}$. 
    By \cref{lemma:small-shortcuts}, we may assume that $\hopdist(u,v)=2$.
    If there does not exist a vertex $x\in E_{u,v}$ for which $v\in B(x)$ and $|B(x)| \geq 2$ holds simultaneously,
    tiebreaking has no influence on the
    vertices in $E_{u,v}$ as the choice of $v$ is unambiguous, so $E_{u,v}\in S$.
    Thus, assume there exists a vertex $x\in E_{u,v}$ for which $v\in B(x)$ and $|B(x)|\geq 2$ holds.
    Since $\{u,v\}$ shortcuts the path $(u,y,v)$, any vertex $z\in E_{u,v}$
    contains $(u,y,v)$ as a subpath.
    In addition, there exists a vertex $z \in B(x)\setminus\{v\}$ such that
    $\dist(y,v)=\dist(y,z)$, but then for each vertex $t\in E_{u,v}$ it holds
    that~$\rhoconstspt{k}{t}{}$
    ends with the suffix $u,y$ as otherwise $\rhoconstspt{k}{t}{}$ is not a path
    contradicting $t\in\mathcal{X}$.
    Thus, we can apply~\cref{lemma:tiebreaking}, so all vertices in $E_{u,v}$ have the same prospective vertices eligible for tiebreaking.
    By the tiebreaking rule above, all vertices would have chosen the same vertex as tiebreaker.
    Thus, there exists $E'_{u,j} \in S$ such that $E_{u,v} \subseteq E'_{u,j}$.
    Since this holds for any set in $OPT$, the optimal solution for the set system $S$ is no larger than the one for $OPT$.
\end{proof}

\kplusoneundirected*
\begin{proof}
    Take an arbitrary strict ordering of the vertices $\varphi$, e.g., the lexicographic ordering.
    By \cref{theorem:tiebreak}, tiebreaking via $\varphi$ optimally solves the $(k,k+1)\text{-}\mathrm{Tiebreaker}$ problem.
   	By \cref{lemma:forest}, $\krhoRestrictedSubgraph{k+1}^\varphi$ is a forest.
    Thus, the problem can be solved in polynomial-time by applying Courcelle's theorem~\cite{DBLP:journals/iandc/Courcelle90}.
\end{proof}

\section{Conclusion and Future Work}\label{sec:conclusion}
We have given a simple reduction proving \NP-hardness that can be taught in class.
Although simple in nature, it is sufficiently expressive to improve on the prior state of the art by reducing the prior constraint of $k\geq 3$ for hardness down to $k\geq 2$.
In addition, we do not rely on a polynomial dependency of $\rho$ on $n$ to exhibit hardness unlike previus work.

While we successfully settle the complexity for the undirected case, the directed case for $\rho=k+1$ remains open.
Many of the technical lemmas used for the undirected case, e.g., \cref{lemma:forest,lemma:tiebreaking}, crucially rely on the ability to walk a path backwards.
Thus, the proofs inevitably break down when considering directed paths instead.  
Although the structure in this setting is still highly constrained, any proof has to deal with the asymmetry inherent to directed graphs.
Several statements about $\krhoRestrictedSubgraph{k+1}^\varphi$ turn out to be false for its directed analogue.
For example, the directed variant of $\krhoRestrictedSubgraph{k+1}^\varphi$ is neither a forest nor a pseudoforest.
This suggests that the directed case for $\rho = k+1$ may not be a mere technical variant, but could instead belong to a different complexity class, or that at least it requires a fundamentally different approach.
The settlement of this case is a problem for future work.

Our tractability proof for the undirected case with $\rho=k+1$ relies on Courcelle's theorem.
Although this approach allows us to prove that the problem is polynomial-time solvable, the specific structure of the regime provides a clear opportunity for future work to design a more direct and efficient combinatorial algorithm.

\clearpage
\bibliography{main}
\end{document}